\newtheorem{theorem}{Theorem}%
\newaliascnt{lemma}{theorem}
\newtheorem{lemma}[lemma]{Lemma}%
\newaliascnt{claim}{theorem}
\newtheorem{claim}[claim]{Claim}%
\newaliascnt{corollary}{theorem}
\newtheorem{corollary}[corollary]{Corollary}%
\newaliascnt{proposition}{theorem}
\theoremstyle{definition}
\newtheorem{definition}{Definition}
\newtheorem{example}{Example}
\newaliascnt{remark}{theorem}
\newtheorem{remark}[remark]{Remark}
\renewcommand{\comment}[1]{}
\newcommand{\ex}{\mathop{\operatorname{E}}}
\newcommand{\pr}{\mathop{\operatorname{Pr}}}
\newcommand{\fall}[1]{\underbar{r}}
\newcommand{\A}{{\mathcal A}}
\newcommand{\M}{{\mathcal M}}
\newcommand{\alg}{\A}
\newcommand{\mech}{\M}
\newcommand{\eps}{{\epsilon}}
\newcommand{\ialg}{\overline{\alg}}
\newcommand{\val}{v}
\newcommand{\vals}{{\mathbf \val}}
\newcommand{\vali}[1][i]{{\val_{#1}}}
\newcommand{\valmi}[1][i]{{\val_{-#1}}}
\newcommand{\alloc}{{x}}
\newcommand{\intalloc}{x}
\newcommand{\allocs}{{\mathbf \alloc}}
\newcommand{\intallocs}{{\mathbf \intalloc}}
\newcommand{\alloci}[1][i]{{\alloc_{#1}}}
\newcommand{\intalloci}[1][i]{{\intalloc_{#1}}}
\newcommand{\dist}{F}
\newcommand{\dists}{{\mathbf \dist}}
\newcommand{\disti}[1][i]{{\dist_{#1}}}
\newcommand{\dens}{f}
\newcommand{\densi}[1][i]{{\dens_{#1}}}
\newcommand{\price}{p}
\newcommand{\pricei}[1][i]{{\price_{#1}}}
\newcommand{\thresh}{t}
\newcommand{\mass}{\alpha}
\newcommand{\AutoAdjust}[3]{\mathchoice{ \left #1 #2  \right #3}{#1 #2 #3}{#1 #2 #3}{#1 #2 #3} }
\newcommand{\InBrackets}[1]{\AutoAdjust{[}{#1}{]}}% {\left[{#1}\right]}
\newcommand{\Ex}[2][]{\operatorname{\mathbf E}_{#1}\InBrackets{#2}}
\newcommand{\Prx}[2][]{\operatorname{\mathbf{Pr}}_{#1}\InBrackets{#2}}
\newcommand{\dd}{\mathrm{d}}  % for integrals
\newcommand{\RED}{\mathcal R}
\newcommand{\levelset}{S}
\newcommand{\winset}{\levelset}
\newcommand{\disalloc}{{\overline \intalloc}}
\newcommand{\disallocs}{{\overline \intallocs}}
\newcommand{\disalloci}{{\disalloc_i}}
\newcommand{\estalloc}{{\tilde \intalloc}}
\newcommand{\estallocs}{{\tilde \intallocs}}
\newcommand{\estalloci}{{\estalloc_i}}
\DeclareMathOperator{\Var}{Var}
\begin{document}
\title{Cost-Recovering Bayesian Algorithmic Mechanism Design
\thanks{This work was done when the first, third and last authors were interning at Microsoft Research New England Lab.}
}
\author{Hu Fu\thanks{Computer Science Dept., Cornell University. \tt{hufu@cs.cornell.edu}.}
\and Brendan Lucier\thanks{Microsoft Research New England. \tt{brlucier@microsoft.com}}
\and Balasubramanian Sivan\thanks{Computer Sciences Dept., University of Wisconsin-Madison. \tt{balu2901@cs.wisc.edu}.}
\and
Vasilis Syrgkanis\thanks{Computer Science Dept., Cornell University. \tt{vasilis@cs.cornell.edu}.}}
\date{}
\maketitle{}

\begin{abstract}
We study the design of Bayesian incentive compatible mechanisms in single parameter domains, for the objective of optimizing social efficiency as measured by social cost.  In the problems we consider, a group of participants compete to receive service from a mechanism that can provide such services at a cost.  The mechanism wishes to choose which agents to serve in order to maximize social efficiency, but is not willing to suffer an expected loss: the agents' payments should cover the cost of service in expectation. 
 
We develop a general method for converting arbitrary approximation algorithms for the underlying optimization problem into Bayesian incentive compatible mechanisms that are cost-recovering in expectation.  In particular, we give polynomial time black-box reductions from the mechanism design problem to the problem of designing a social cost minimization algorithm without incentive constraints.  Our reduction increases the expected social cost of the given algorithm by a factor of~$O(\log(\min\{n, h\}))$, where $n$ is the number of agents and $h$ is the ratio between the highest and lowest nonzero valuations in the support.  We also provide a lower bound illustrating that this inflation of the social cost is essential: no BIC cost-recovering mechanism can achieve an approximation factor better than $\Omega(\log(n))$ or $\Omega(\log(h))$ in general. 

Our techniques extend to show that a certain class of truthful algorithms can be made cost-recovering in the non-Bayesian setting, in such a way that the approximation factor degrades by at most $O(\log(\min\{n, h\}))$.  This is an improvement over previously-known constructions with inflation factor $O(\log n)$.

\end{abstract}

\thispagestyle{empty}

\newpage

%\category{J.4}{Social and Behavioral Sciences}{Economics}
%\category{F.2.0}{Analysis of Algorithms and Problem Complexity}{General}
%\terms{Algorithms, Economics, Theory}
%\keywords{cost-sharing mechanism design, black-box reductions}
%\acmformat{Hu Fu, Brendan Lucier, Balasubramanian Sivan and Vasilis Syrgkanis, 2013. Cost-Recovering Bayesian Algorithmic Mechanism Design}

%\begin{bottomstuff}
%The work was done when the first, third and last authors were interning at Microsoft Research New England Lab.
%\end{bottomstuff}

\section{Introduction}
\label{sec:intro}

Consider the following scenario: $n$ self-interested agents wish to receive service from a central service provider.  The provider can give service to any set $S$ of the agents, but at a cost $C(S)$, where costs are monotone: $C(S) \leq C(T)$ when $S \subseteq T$.  Each agent has a private value for obtaining service, which they could misrepresent if they so choose.  The provider must decide, given the reported values of the agents, which subset to serve and how much payment to collect from each one.  The goal of the service provider is to maximize the social welfare: the value of the served agents minus the service costs.  %Moreover, the server would prefer to use a protocol that does not incur a net loss; the payments charged should at least cover the service costs.  
How should the server proceed, given that the agents are rational and may strategically manipulate their declarations?  

If we ignore computational considerations, this mechanism design problem can be resolved via the well-known VCG mechanism, which optimizes social welfare and induces truth-telling as a dominant strategy (i.e., it is in each agent's best interest to report his value truthfully, regardless of the behavior of the other agents).  
%Unfortunately, it may be infeasible to apply the VCG mechanism when the underlying optimization problem is computationally intractible, and moreover the VCG mechanism need not recover costs.  
If we ignore the incentive constraints, then for many problems of this form (e.g.\ steiner tree, vertex cover, etc.) there are known approximation algorithms that obtain nearly efficient outcomes; however, such algorithms 
%% Hu: reworded the sentence so that a reader not familiar with characterizations of IC algorithms is not puzzled
% may not be incentive compatible in general.  
in general do not admit payment schemes that would induce truth-telling behavior from the participants.
Finding satisfactory solutions that overcome both the algorithmic and economic difficulties inherent in such problems is the primary research agenda in the field of algorithmic mechanism design.  %been the primary focus of the field of algorithmic mechanism design.
%of solving the underlying optimization problem, as well as the economic difficulty of aligning the incentives of the agents with those of the server.

A recent line of work has sought to address such problems by considering the \emph{Bayesian} setting, where agent values
are drawn independently from publicly-known distributions.  In such settings, there exist black-box reductions that
convert an arbitrary algorithm into an incentive compatible (i.e., truthful) mechanism with no loss in expected social welfare
\citep{HL10, BH11, HKM11} (where truthfulness in the Bayesian setting means that truth-telling is a Bayes-Nash equilibrium of the mechanism).
Such transformations reduce the mechanism design problem to a purely algorithmic one, decoupling the economic and computational constraints.  A mechanism designer is therefore free to design approximation algorithms, tailored to the specifics of the problem at hand, without paying heed to issues of agent incentives.

Our study begins with the observation that these black-box reductions have an unfortunate property: the server may incur a net loss in expectation.  That is, the payments collected by the mechanism may not cover service costs, in expectation over the agent types. Even a server who wishes to maximize the social welfare may balk at the prospect of following such a protocol.  Our motivating question, then, is whether the theory of Bayesian black-box reductions can be modified to avoid such expected losses.
This can be viewed as a Bayesian version of a \emph{cost-sharing} mechanism design problem, in which the costs for service must be divided among the participants in the mechanism.  Our contribution is to initiate the study of such cost-sharing problems in the Bayesian domain, and to exhibit general black-box reductions converting arbitrary algorithms into truthful cost-sharing mechanisms.

We note that, as in the theory of cost-sharing, one immediately encounters strong impossibility results in such
problems: social welfare is an ill-behaved optimization metric for which no approximation guarantees are possible in
polynomial time even in the full information setting \citep{FPS01}.  Thus, following recent developments in the cost-sharing literature
\citep{RS09}, we describe economic efficiency with respect to minimizing the \emph{social cost}: the service costs plus the total value of the agents who are not served.  

The problem of designing cost-sharing mechanisms that minimize social cost has been extensively studied in the
non-Bayesian domain.  Truthful cost-recovering mechanisms have been developed for many specific problem formulations,
such as Steiner tree/forest \citep{JV01,RS09,GKLRS07,PT03}, facility location \citep{PT03}, multicast routing
\citep{FPS01}, and scheduling problems \citep{BS07}.  These mechanisms generally follow a high-level approach due to
\citet{Mou99}.
%steiner tree, truthful cost-sharing mechanisms \cite{xx} have been
%developed using .
%; such mechanisms are known as \emph{Moulin} mechanisms.  
Roughly speaking, a Moulin mechanism proceeds by selecting an initial allocation and then iteratively offering cost-recovering prices to the current set of players.  Any player who is not willing to pay his offered price is then removed from the set and the process repeats.  Such mechanisms have been used with great success for numerous problems, but in general the offered prices must be tailored to a particular problem and algorithm; the construction does not generally apply to arbitrary approximation methods.  

A more general construction, also based upon Moulin mechanisms, was recently proposed by \citet{GS12} in the non-Bayesian
setting.  They show that an arbitrary approximation mechanism that is dominant strategy truthful and satisfies a
no-bossiness condition can be converted into a truthful cost-recovering mechanism, while increasing the social cost by a
factor of $O(\log n)$.  This dependency on $n$ matches a lower bound due to \citet{DMRS08}.  While their method applies to many types of algorithms, including a broad class of LP-based algorithms, the truthfulness and no-bossiness requirements limit its generality.  We ask: is a fully general reduction possible in the Bayesian domain, where incentive and efficiency constraints are required to hold in expectation over the agent types?

\paragraph{Our Results}

Our main result is a general reduction that converts an arbitrary \emph{algorithm} into a Bayesian incentive compatible
mechanism with the property that the server does not incur an expected loss.  
Our reductions are \emph{black-box}, meaning that they require only the ability to query the given algorithm on arbitrary input profiles.
We actually provide two different reductions, with slightly different guarantees on social cost.  The
first increases the expected social cost of the original algorithm by a factor of $O(\log n)$, and the second increases the
social cost of the original algorithm by a factor of $O(\log (\val_{\max} / \val_{\min}))$ where $\val_{\max}$ and $\val_{\min}$ are the largest and smallest non-zero values in the support of the value distributions.  Combining these two constructions, we 
%% Hu: reworded here
%obtain an increase of 
contain the increase in social cost to a factor of~$O(\min\{\log n, \log (v_{\max}/v_{\min})\})$.  
%Notably, our reduction applies to arbitrary approximation algorithms, and requires only black-box access (i.e.\ the ability to query the algorithm on arbitrary input profiles).  

We also demonstrate that the increase in social cost exhibited by our constructions is essential.  Specifically, 
%% Hu: added "based on..."
based on the construction of \citet{DMRS08}, we show
that no BIC mechanism that recovers cost in expectation can achieve an approximation factor (to the optimal social cost) better than 
%% Hu: changed the lower bound.  this is what we have for now
% $O(\log n)$ approximation to the optimal social cost, or better than an $O(\log(v_{max}/v_{min}))$ approximation.  
$O(\log (\val_{\max} / \val_{\min}) - \sqrt{\val_{\max} / \val_{\min}n})$.  An implication of this bound is that our dependencies on $n$ and $v_{\max}/v_{\min}$ are tight:
no cost-recovering BIC mechanism can achieve approximation factor $o(\log n)$ or approximation factor $o\left(\log (v_{\max}/v_{\min})\right)$.
%% Hu: I removed the following sentence.  the current lower bound is an easy extension of previous work after all.  I don't feel strongly about this though
% We found this lower bound to be rather surprising; after all, it is natural to suspect that the relaxation of the problem constraints to the Bayesian domain would enable better approximation factors than what is achievable for ex post truthful and cost-recovering mechanisms.  

The ideas underpinning our 
%% Hu: changed constructions to reductions, to make it clear we are not talking about the lower bound construction
%constructions 
reductions
are motivated by the Moulin mechanism. We apply the paradigm of determining appropriate payments for each agent, and then repeatedly excluding agents who are unwilling to pay the required amount.  However, rather than sequentially excluding agents from an outcome returned by the algorithm, we apply a pre-processing step to the given algorithm in which we sequentially exclude \emph{potential agent declarations}.  This analysis makes use of well-known characterizations of Bayesian incentive compatibility with respect to an algorithm's interim allocation curves (the expected allocation to a player, as a function of his declaration, over the space of declarations of the other agents).  The result of this pre-processing step will be a pre-computed threshold, specific to the given algorithm; any agent who bids below the threshold will be denied service regardless of the original algorithm's outcome.  High thresholds allow the mechanism to charge large payments, but may substantially increase social costs.  We prove that this tension can be balanced so that costs are recovered but yet the social costs are not increased by too much.  

A technical difficulty in the above approach is that knowledge about the algorithm, necessary to determine appropriate thresholds, must be obtained via sampling, which introduces errors.  In order to guarantee that the mechanism recovers costs entirely, rather than only approximately, it is necessary to modify our mechanisms to recover more cost than strictly necessary.  We prove that this has only a small impact on social cost, which can be made arbitrarily small via additional sampling.

We also note that our mechanism with approximation factor $O(\log (\val_{\max}/\val_{\min}))$ extends to the non-Bayesian
setting as well.  Indeed, we show that the cost-sharing construction due to \citet{GS12} can be modified
so that it increases the social cost of a given algorithm by a factor of $O(\min\{\log n, \log (\val_{\max} / \val_{\min})\})$, rather than $O(\log n)$.  This provides an improvement to the obtainable approximation factors when agent values lie in a small range or are drawn from a small set of possible values.

\paragraph{Related work}

Moulin mechanisms were proposed by \citet{Mou99} and \citet{MS01}, who show that the resulting mechanism will be
cost-recovering as long as the prices offered satisfy a cross-monotonicity condition.  Moulin mechanisms have been
applied to various cost-sharing problems, such as Steiner tree/forest \citep{JV01,RS09,GKLRS07,PT03}, facility location
\citep{PT03}, multicast routing \citep{FPS01}, and scheduling problems \citep{BS07}.  For the most part such mechanisms
are also required to be approximately budget balanced, meaning that the mechanism does generate (too large of) a profit.
\citet{IMM08} showed that, for certain problems, such cross-monotonic pricing methods can imply that the budget-balance approximability factor can be very high.
%A notable example of cost sharing mechanisms that do not follow the Moulin framework are those given by Devanur et al. \cite{} for set cover and facility location problems.

\citet{RS09} suggested social cost as a metric for social efficiency, allowing the study of approximate efficiency in
cost-recovering mechanisms.  Subsequent work considered the approximation factors of cost-sharing methods according to
this metric, for various problems \citep{RS09,BS07,RS07,CRS06}.
\citet{DMRS08} show that, for the public-excludable good problem ($C(S) = 1$ for all $S \neq \emptyset$, $C(\emptyset) = 0$), any (ex post) truthful cost-recovering mechanism will be an $\Omega(\log n)$ approximation to the optimal social cost.

Georgiou and Swamy provide a general method for converting truthful algorithms into truthful cost-recovering mechanisms.  They say an algorithm $\alg$ is \emph{no-bossy} if, for each $i$, if $\alg$ serves a set $S \ni i$ on input $\vals$, then $\alg$ will also serve this same set $S$ on any input $(\vali', \valmi)$ with $\vali' > \vali$.  They show that any $\rho$-approximate algorithm that is dominant strategy truthful and no-bossy can be converted into a truthful cost-recovering $O(\rho \log n)$-approximate mechanism $\mathcal{M}$.  They also provide a linear programming technique for constructing truthful no-bossy algorithms.  Their reduction applies in the ex post (non-Bayesian) setting, rather than the Bayesian setting that we consider.  %The primary difference between this work and our own is that our reduction applies to completely arbitrary algorithms, but our truthfulness, cost-recovery, and social cost guarantees are in expectation for the Bayesian setting

In the Bayesian domain, where truthfulness is relaxed to Bayesian incentive compatibility, there are black-box
reductions that convert approximation algorithms into truthful mechanisms in single-parameter \citep{HL10} and
multi-parameter \citep{BH11, HKM11} domains.  These reductions incur an additive loss to the expected social welfare of the original algorithm, which can be made arbitrarily small.  
%Due to the affine relationship between social cost and social welfare, such reductions can also be used to convert approximation algorithms for social cost problems into truthful mechanisms, increasing the expected social cost by a small additive amount.  
These constructions do not consider the cost recovery properties of the resulting mechanisms.

%\begin{itemize}
%\item Stuff on cost recovery; moulin, moulin and shenker.  Mechs for specific problems.
%\item Social cost minimization problems: roughgarden and sundararajan, and follow-up work.
%\item General techniques for cost recovery.  Georgiou and Swamy, Bleischwitz, and Brenner and Schafer.
%\item BIC reductions, black-box.
%\end{itemize}

\section{Preliminaries}
\label{sec:prelim}
% 
% \begin{itemize}
% \item $n$ bidders.  Valuations $\vals$, drawn from product distribution $\dists$, support in $[0,1]$.
% \item Outcome is a set $S \subset [n]$ of agents that receive service.  Each set $S$ has cost $C(S)$.  Assume $C(S) \leq C(T)$ for all $S \subseteq T$.  
% \item Social welfare: $\sum_{i \in S} \vali$.  Social cost: $C(S) + \sum_{i \not\in S} \vali$.
% \item Given algorithm $\alg$, write $SW(\alg,\vals)$ for SW of $\alg$ on input $\vals$.  Similarly, $SC(\alg,\vals)$.
% Write $\Ex[\vals]{SC(\alg)}$ for expected social cost of $\alg$ given $\vals \sim \dists$.
% \item Interim allocation rule for an algorithm $\alg$ is $\intalloci(\vali)$.  We say $\alg$ is monotone if $\intalloci$ is monotone non-decreasing for each $i$.
% \end{itemize}
% 
%% Mechanism Design

\paragraph{Single Parameter Mechanism Design}  
Mechanism design studies optimization problems with private information.  Among a set of bidders $[n] = \{1, 2, \cdots,
n\}$, a mechanism decides upon a
subset~$S$ of receivers of a certain service.  Each bidder~$i$ has a private valuation ~$\vali$ for the service.  To incentivize bidders to reveal their valuations truthfully, the mechanism
also charges a payment.  Formally, a mechanism consists of an \emph{allocation} rule $\alloc: \mathbb R_+^n \mapsto [0,
1]^n$ and a \emph{payment} rule $\price: \mathbb R_+^n \mapsto \mathbb R_+^n$. For a valuation profile $\vals = (\val_1, \ldots, \val_n)$,
$\alloci(\vals)$ is the probability that bidder~$i$ receives the service, and $\pricei(\vals)$ is the
payment made by bidder~$i$.  Bidder~$i$ has a utility of $\alloci(\vals) \vali - \pricei(\vals)$.  A mechanism is said
to be \emph{individually rational} (IR) if no bidder ever has a negative utility.  We impose the IR condition throughout the
paper.  A mechanism is said to be \emph{ex post incentive compatible} or \emph{truthful} if,
\begin{align}
\label{eq:IC}
\tag{IC}
\alloci(\vali, \val_{-i}) \vali - \pricei(\vali, \val_{-i}) \geq \alloci(\vali', \val_{-i}) \vali - \pricei(\vali',
\val_{-i}), \qquad \forall i, \forall \vali, \vali', \val_{-i}.
\end{align}

%% Social Cost

\paragraph{Social Welfare and Social Cost}
A well studied objective in mechanism design is the \emph{social welfare}, defined as $\sum_{i \in S} \vali$, where
$S$ is the set of bidders receiving a service.  In this work, we focus on scenarios where a cost~$C(S)$ is incurred when
subset~$S$ is served.  We assume that $C(\emptyset) = 0$, and that $\forall S \subset T$, $C(S) \leq C(T)$.
% monotonicity of~$C$, i.e., $C(S) \leq C(T)$, $\forall T \supset S$.  
The \emph{social cost} of a subset~$S$ is $C(S) + \sum_{i \notin S} \vali$.  Given an algorithm $\alg: \mathbb R_+^n \mapsto
2^{[n]}$, we write the social welfare of~$\alg$ on~$\vals$ as $SW(\alg, \vals)$, and the social cost similarly as
$SC(\alg, \vals)$.  We say a mechanism recovers its cost if for all $\vals$
% and $S$ that can be output on~$\vals$, 
$\sum_i \pricei(\vals) \geq C(S)$.  
% (In a deterministic mechanism, $S$ is simply $\{i: \alloci(\vals) = 1\}$.)

%% BIC
\paragraph{Bayesian Mechanism Design}
This paper focuses on situations in which bidders have only incomplete information regarding the other bidders,  captured by the study of
Bayesian mechanism design.  Each bidder's valuation~$\vali$ is independently drawn from a known distribution~$\disti$,
with probability density function~$\densi$.
By scaling all values and costs down, we may assume without loss of generality that all distributions are supported on
$[0, 1]$.  We denote by $\val_{\max}$ the supremum of the support of all $\disti$'s, and $\val_{\min}$ the infimum of
nonzero values in the support.  We assume $\val_{\min}$ is bounded away from~$0$ and denote by $h$ the ratio
$\val_{\max} / \val_{\min}$.

The allocation rule of a mechanism gives rise to an interim allocation for each bidder.  The interim allocation
$\intalloci(\vali)$ is bidder~$i$'s probability of getting served, taking an expectation over the other bidders'
valuations, i.e.,  $\Ex[\val_{-i}]{\alloci(\vali, \val_{-i})}$.  A mechanism is said to be \emph{Bayesian incentive
compatible} (BIC) if
\begin{align}
\label{eq:BIC}
\tag{BIC}
\intalloci(\vali) \vali - \Ex[\valmi]{\pricei(\vali, \valmi)} \geq \intalloci(\vali') \vali -
\Ex[\valmi]{\pricei(\vali', \valmi)}, \qquad \forall i, \vali, \vali'.
\end{align}
The term $\Ex[\valmi]{\pricei(\vali, \valmi)}$ is also called the interim payment~$\pricei(\vali)$.  
Interim allocation rules and payments of BIC mechanisms are characterized by the following classic result.

\begin{lemma}[\citealt{Mye81}]
\label{lem:Mye81}
A mechanism with interim allocation rules $\intallocs$ is BIC iff each $\intalloci$ is monotone non-decreasing, and the
expected (or interim) payment~$\pricei(\vali)$ made by bidder~$i$ with
valuation~$\val_i$ is $\val_i \intalloci(\val_i) - \int_0^{\val_i} \intalloci(y) \: \dd y$.
\end{lemma}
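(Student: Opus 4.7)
The plan is to follow the classical Myersonian envelope argument, proving the two directions of the equivalence separately. For the forward direction I would extract monotonicity of $\intalloci$ from two symmetric \ref{eq:BIC} inequalities, then pin down the interim payment via an envelope-style integration. For the reverse direction I would substitute the formula into the BIC constraint and verify the inequality using monotonicity alone.

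To prove necessity, I write the BIC constraint both with type $\vali$ considering a report of $\vali'$ and with type $\vali'$ considering a report of $\vali$. Letting $\pricei(\vali) = \Ex[\valmi]{\pricei(\vali,\valmi)}$ denote the interim payment, these read
\[
\intalloci(\vali)\vali - \pricei(\vali) \geq \intalloci(\vali')\vali - \pricei(\vali'), \qquad
\intalloci(\vali')\vali' - \pricei(\vali') \geq \intalloci(\vali)\vali' - \pricei(\vali).
\]
Adding the two yields $(\vali - \vali')(\intalloci(\vali) - \intalloci(\vali')) \geq 0$, which is exactly monotonicity of $\intalloci$. Rearranging them individually gives the sandwich
\[
\vali'\bigl(\intalloci(\vali) - \intalloci(\vali')\bigr) \leq \pricei(\vali) - \pricei(\vali') \leq \vali\bigl(\intalloci(\vali) - \intalloci(\vali')\bigr),
\]
so the interim utility $u_i(v) := v\,\intalloci(v) - \pricei(v)$ is the pointwise supremum of the affine functions $v \mapsto v\,\intalloci(v') - \pricei(v')$, hence convex with subgradient $\intalloci(v)$. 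Integrating from $0$ to $\vali$ and using IR together with the standard normalization $\pricei(0)=0$ to fix $u_i(0) = 0$, we obtain $u_i(\vali) = \int_0^{\vali}\intalloci(y)\,\dd y$, which after rearrangement is the stated payment formula.

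For sufficiency, I substitute the payment formula into the utility of reporting $\vali'$ when the true type is $\vali$, obtaining $(\vali - \vali')\,\intalloci(\vali') + \int_0^{\vali'}\intalloci(y)\,\dd y$, while the truth-telling utility is $\int_0^{\vali}\intalloci(y)\,\dd y$. Their difference equals
\[
\int_{\vali'}^{\vali}\bigl(\intalloci(y) - \intalloci(\vali')\bigr)\,\dd y,
\]
which by monotonicity of $\intalloci$ is nonnegative regardless of whether $\vali \geq \vali'$ or $\vali \leq \vali'$; this establishes BIC.

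The only mildly subtle point is the justification of the integration constant $u_i(0)=0$: BIC alone determines the interim payment only up to an additive constant, so one needs to invoke the paper's individual rationality assumption together with the standard convention that the zero type pays nothing (equivalently, no positive transfers at type $0$) to pin down the formula exactly rather than up to an additive constant. Once this normalization is in place, both directions of the equivalence follow from the displayed computations above, with no further technical difficulty beyond the fact that monotone bounded functions on $[0,1]$ are Riemann integrable.
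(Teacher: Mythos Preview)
Your argument is the standard Myersonian envelope proof and is correct; the only delicate point, which you already flag, is that BIC alone fixes $\pricei(\cdot)$ only up to an additive constant, so the exact formula $\pricei(\vali) = \vali\intalloci(\vali) - \int_0^{\vali}\intalloci(y)\,\dd y$ requires the normalization $\pricei(0)=0$ (or equivalently $u_i(0)=0$), which is consistent with the paper's standing IR assumption and the implicit no-positive-transfers convention.

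There is nothing to compare against, however: the paper does not supply its own proof of \autoref{lem:Mye81}. It is stated as a classical result attributed to \citet{Mye81} and used as a black box throughout (e.g., in the payment definitions of Algorithms~\ref{alg:bic-logh} and~\ref{alg:bic-logn} and in \autoref{sec:sample-price}). Your write-up is a perfectly good textbook proof of the cited lemma, but it is not replacing or paralleling any argument in the paper itself.
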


We will denote by $SC(\alg)$ the expected social cost of an algorithm~$\alg$, i.e., $\Ex[\vals]{C(\alg(\vals)) + \sum_{i
\notin \alg(\vals)} \vali}$.  We say a mechanism~$\mech$ recovers its cost in expectation if $\Ex[\vals]{\sum_i \pricei(\vals)}
\geq \Ex[\vals]{C(\mech(\vals))}$, where $\mech(\vals)$ is the set of bidders $\mech$ serves at valuation profile~$\vals$,
which is allowed to be randomized.  Under the requirement of cost recovery in expectation, it is without loss of generality to assume
that a BIC mechanism charges a payment of $\tfrac{\pricei(\vali)}{\intalloci(\vali)}$ to bidder~$i$ with
valuation~$\vali$ when he is served, and $0$ when he is not served.

%% Black box reduction

\paragraph{Black-Box Reductions for BIC Mechanisms}
While ex post truthful mechanisms optimize objectives (such as social welfare) in the straitjacket of incentive
constraints, \citet{HL10} showed that, if one were to relax the solution concept to that of BIC, essentially any
(approximate) social welfare maximization algorithm can be transformed to a BIC mechanism with little loss of social
welfare.

\begin{theorem}[\citealt{HL10}]
\label{thm:HL10}
In any single-dimensional setting where the agents' valuations are drawn independently from known distributions, given any~$\eps > 0$, there is a polynomial time computable reduction~$\RED$ such that, given any algorithm~$\alg$, $\RED(\alg)$ is a BIC mechanism with $\ex_\vals[SW(\RED(\alg))] \geq \ex_{\vals}[SW(\alg)] - \eps$.
\end{theorem}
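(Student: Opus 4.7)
By \autoref{lem:Mye81}, designing a BIC mechanism reduces to implementing monotone interim allocation rules $\intallocs$; payments are then pinned down up to a constant. The plan is threefold: (i)~estimate the interim allocation curve $\intalloci(\cdot)$ induced by the black-box algorithm $\alg$ via Monte Carlo sampling, (ii)~ironize the estimate to obtain a monotone target curve $\bar{\intalloc}_i$, and (iii)~implement these ironed curves using $\alg$ only as a black box, with at most $\eps$ loss in expected welfare.

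For step (i), fix a fine discretization of $[0,1]$ and, for each agent $i$ and each grid value $v$, estimate $\intalloci(v)$ by averaging indicators of whether $\alg$ serves $i$ over many fresh samples $\valmi \sim \dists_{-i}$ with $\vali$ fixed at $v$. Step (ii) applies Myerson's ironing to smooth the noisy empirical curve into a monotone $\bar{\intalloc}_i$. The heart of the reduction is step (iii), a \emph{replica--surrogate} matching. For each agent $i$, draw $k$ surrogates $s_1^i,\ldots,s_k^i$ i.i.d.\ from $\disti$, and insert agent $i$'s bid $\bidi$ into a list of $k$ replicas (the other $k-1$ also drawn i.i.d.\ from $\disti$) at a uniformly random slot. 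Match replicas to surrogates by maximum-weight bipartite matching with edge weight $\val\cdot\bar{\intalloc}_i(s)$ between a replica of value $\val$ and a surrogate of value $s$, and charge VCG payments inside this matching. Let $\sigma_i$ be the surrogate matched to agent $i$'s slot, and let the overall allocation be $\alg(\sigma_1,\ldots,\sigma_n)$, with $i$ paying the VCG payment of the inner matching divided by an appropriate normalization.

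BIC follows because, for each agent~$i$ in isolation, the inner matching is a single-parameter VCG auction in which agent~$i$'s value for being matched to surrogate $s$ is $\vali \cdot \bar{\intalloc}_i(s)$, and the downstream call to $\alg$ depends on $\bidi$ only through this matching; the resulting interim allocation is precisely $\bar{\intalloc}_i(\bidi)$, which is monotone by ironing. For welfare, the key is symmetry: since agent~$i$ occupies a uniformly random slot among $k$ i.i.d.\ replicas, the surrogate profile fed into $\alg$ is distributed (in expectation over all internal randomness) as an i.i.d.\ draw from $\dists$, so the expected social welfare of the surrogate run equals $\ex_{\vals}[SW(\alg,\vals)]$. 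The loss comes only from (a) estimation error in $\bar{\intalloc}_i$ and (b) the $1/k$ bias in how agent~$i$'s true value enters the matching versus a random replica; both can be bounded by $\eps$ by choosing the number of Monte Carlo samples and the matching size $k$ polynomial in $n/\eps$, via standard Hoeffding/Chernoff bounds.

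The main obstacle is ensuring that BIC is \emph{exact}, not merely approximate, despite the sampling noise. The way out is that the inner VCG matching is defined with respect to the \emph{ironed empirical} curve $\bar{\intalloc}_i$, not the unknown true interim curve $\intalloci$. Consequently, monotonicity of the implemented interim rule depends only on the monotonicity enforced by ironing, and Myerson's payment identity applied to $\bar{\intalloc}_i$ yields valid BIC payments on the nose. Estimation noise then appears only in the welfare analysis, where a Lipschitz-in-$\bar{\intalloc}_i$ argument converts the $O(\eps)$ uniform deviation on the grid into an $O(\eps)$ additive welfare gap, matching the bound claimed in \autoref{thm:HL10}.
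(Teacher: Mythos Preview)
This theorem is cited from \citet{HL10}; the present paper does not prove it but sketches the construction in Appendix~\ref{sec:eps-bic} in order to derive the social-cost analogue, \autoref{cor:HL10}. That construction is \emph{resampling within ironed intervals}, not replica--surrogate matching: one discretizes the value space, estimates the bucketed interim allocations, computes the ironed partition~$\mathcal{P}$, and on input~$\vals$ redraws each $\vali$ from $\disti$ conditioned on its block of~$\mathcal{P}$ before calling~$\alg$. The resulting interim allocation is exactly the ironed (hence monotone) curve, so Myerson payments give BIC; a final mixing step absorbs the residual non-monotonicity due to estimation error. Your route is instead the multi-parameter technique of \citet{BH11,HKM11}.

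As written, your BIC argument has a gap. You assert that the inner VCG matching is truthful because agent~$i$'s value for surrogate~$s$ is $\vali\cdot\bar{\intalloc}_i(s)$. It is not: once matched to~$s$, agent~$i$'s expected downstream value from the call $\alg(\sigma_1,\dotsc,\sigma_n)$ is $\vali\cdot\intalloci(s)$, where $\intalloci$ is the \emph{true} interim curve of~$\alg$, not the ironed estimate~$\bar{\intalloc}_i$. VCG run with edge weights that do not reflect the agent's actual preferences is not truthful for those preferences. Correspondingly, the claim that ``the resulting interim allocation is precisely $\bar{\intalloc}_i(\bidi)$'' is false: since any monotone weight function induces the assortative matching (rank of $\bidi$ among replicas $\mapsto$ same-rank surrogate), the implemented interim probability is $\Ex{\intalloci(\sigma_i)\given \bidi}$, which tends to $\intalloci(\bidi)$ as $k\to\infty$ and need not be monotone. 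To make a replica--surrogate argument work one must use edge weights that encode the agent's actual value for each surrogate, i.e., (unbiased samples of) $\vali\cdot\intalloci(s)$, as in \citet{HKM11}; ironing the weights while still routing service through the unmodified~$\alg$ does not deliver exact BIC.
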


The resampling technique of \autoref{thm:HL10} easily applies to the settings where expected social cost is the
objective, and we have the following corollary.

\begin{corollary}
\label{cor:HL10}
In any single-dimensional setting where the agents' valuations are drawn independently from known distributions, given any $\eps > 0$, there is a polynomial time computable reduction~$\RED$ such that, given any algorithm~$\alg$, $\RED(\alg)$ is a BIC mechanism with $\ex_\vals[SC(\RED(\alg))] \leq \ex_{\vals}[SC(\alg)] + \eps$.
\end{corollary}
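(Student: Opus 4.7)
The plan is to reduce social-cost minimization to a welfare-type objective via a simple pointwise identity, and then invoke \autoref{thm:HL10} essentially as a black box. Define the \emph{net welfare} of $\alg$ on profile $\vals$ by $NW(\alg, \vals) := SW(\alg, \vals) - C(\alg(\vals))$. Using the definitions of $SW$ and $SC$ from \autoref{sec:prelim}, one has the pointwise identity $NW(\alg, \vals) + SC(\alg, \vals) = \sum_i \vali$, because the values of served bidders together with the values of unserved bidders sum to $\sum_i \vali$ while the $\pm C(\alg(\vals))$ terms cancel. Since $\sum_i \vali$ is a function of $\vals$ alone, preserving the expected value of $NW$ up to additive $\eps$ is equivalent to preserving the expected value of $SC$ up to additive $\eps$.

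Next, I would apply the HL10 reduction~$\RED$ to $\alg$ to obtain a BIC mechanism $\RED(\alg)$. The HL10 construction resamples each reported value from its prior with a small probability~$\delta$ and then runs $\alg$ on the resulting perturbed profile; consequently the distribution over served sets under $\RED(\alg)(\vals)$ coincides with the distribution under $\alg(\vals)$ on a high-probability event and is drawn from an alternative bounded distribution otherwise. The same argument that delivers $\ex_\vals[SW(\RED(\alg))] \geq \ex_\vals[SW(\alg)] - \eps$ also delivers $\ex_\vals[NW(\RED(\alg))] \geq \ex_\vals[NW(\alg)] - \eps$: since both $SW(\alg, \vals)$ and $C(\alg(\vals))$ are bounded after rescaling values and costs, the contribution to $NW$ from the low-probability event is bounded by a constant times $\delta$, which is absorbed into $\eps$ by taking $\delta$ small enough. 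Combining with the identity yields
\begin{align*}
\ex_\vals[SC(\RED(\alg))] = \ex_\vals\InBrackets{\textstyle\sum_i \vali} - \ex_\vals[NW(\RED(\alg))] \leq \ex_\vals[SC(\alg)] + \eps,
\end{align*}
and polynomial runtime is inherited from the black-box construction of \autoref{thm:HL10}.

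The main obstacle is the second step: unlike $SW$, the objective $NW$ contains the nonlinear set function $C(\alg(\vals))$, so it is not determined by interim allocation probabilities alone, and one cannot simply quote an interim-equivalence between $\alg$ and $\RED(\alg)$. The key observation that makes the reduction go through is that HL10's resampling argument in fact controls the full distribution over served sets produced by the mechanism --- the served set is obtained by running $\alg$ on a profile that equals $\vals$ outside a small-probability resampling event --- so any uniformly bounded set function, in particular $C$, is preserved in expectation up to a constant times the resampling probability. This is exactly what is needed to convert HL10's welfare guarantee into the desired cost guarantee.
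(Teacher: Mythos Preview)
Your high-level plan is the right one and matches the paper's approach: relate $SC$ to a welfare-type quantity via $SC = \sum_i \vali - NW$, then argue that the reduction preserves $NW$ up to~$\eps$. You are in fact more explicit than the paper in flagging that the cost term $C(\alg(\vals))$ is not an interim quantity and needs a distributional argument.

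There are two gaps. First, your description of the HL10 construction is inaccurate. The monotonizing step does not ``resample each reported value from its prior with a small probability~$\delta$''; it partitions $[0,1]$ into width-$\delta$ intervals, irons them into a coarser partition, and resamples each $\vali$ from $\disti$ \emph{conditioned on the ironed interval containing $\vali$}. This mischaracterization is not fatal, because the correct mechanism has an even stronger property than you claim: when $\vals \sim \dists$, the resampled profile $\vals'$ is again distributed as $\dists$, so $\Ex[\vals]{C(\alg(\vals'))} = \Ex[\vals]{C(\alg(\vals))}$ exactly. But your stated reason (``equals $\vals$ outside a small-probability event'') is false for the actual construction.

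Second, and more substantively, HL10 has a separate $\eps$-BIC-to-BIC stage, and the paper explicitly notes that this stage \emph{does not} carry over verbatim and must be modified. The issue is precisely the one you try to wave away with ``$C$ is uniformly bounded after rescaling'': in that stage the mechanism may, on a low-probability event, serve a set that $\alg$ would never produce, and nothing bounds $C$ on such sets relative to $\Ex{SC(\alg)}$. The paper's fix is to replace whatever HL10 does on that event by serving a \emph{single} randomly chosen agent with probability proportional to his interval index; this preserves strict interim monotonicity while keeping the incurred cost at most $\max_i C(\{i\})$, which can be bounded by~$n$ after arguing that any agent with $C(\{i\}) > n$ is never worth serving and can be discarded. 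Your proposal does not supply this piece.
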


We note that \citeauthor{HL10} prove \autoref{thm:HL10} by first showing how to construct an $\epsilon$-BIC mechanism, then showing how to convert this into a BIC mechanism at a small loss to social welfare.  Due to the difference in the objective, the technique does not directly apply to our case; a minor modification to the construction of \citeauthor{HL10} is required.  In
Appendix~\ref{sec:eps-bic} we briefly describe the construction involved.

\section{Bayesian Incentive Compatible Cost Recovery}
\label{SEC:BIC}

% \subsection{Black-Box Reductions For General Valuations}
% \label{sec:bic-gen}
% \textbf{Note (Brendan): This section assumes that we have exact access to the interim curves and interim payments.  In reality, we will only have approximations to these.  If our approximations are incorrect, we might not recover costs.  Possible fix: attempt to recover slightly \emph{more} cost than we think we need (i.e.\ $\epsilon n$ more?), argue that this won't hurt social cost too much, and then claim that this offsets our sampling errors.}

In this section we present our main result of converting an arbitrary algorithm to a Bayesian Incentive Compatible
mechanism that approximately minimizes social-cost in expectation, and recovers cost in expectation. We give two
separate reductions, one that gives a $O(\log h)$ approximation, and the other gives an $O(\log n)$ approximation, and thus we have a $O\big(\min\left\{\log h, \log n\right\}\big)$ approximation.  
% This again matches the lower bound of $O\big(\min\left\{\log h, \log n\right\}\big)$, which we show in Section~\ref{sec:bic-lb} to hold for BIC, IIR mechanisms, and cost recovery in expectation. 

\begin{remark}
Our reductions require computing certain expectations, which can be obtained
only via sampling, and hence with small error.  In this section, all
expectations are assumed to be accurately available, i.e., we assume functional
access to the interim allocation rules and valuation distributions. We refer to
this as functional access to an algorithm. We address the error due to sampling
in the more realistic black-box model in \autoref{SEC:SAMPLE}.  Also, the
truthful payment corresponding to the algorithm~$\alg$ requires knowing the
output of $\alg$ on infinitely many values.  In this section, we also assume
that interim payments are accurately available, and we describe the procedure
to circumvent this issue in \autoref{SEC:SAMPLE}. 
\end{remark}

\noindent In the presentation of \autoref{thm:bic_general}, it is convenient to scale valuations to $[1, h]$ from $[0, 1]$, mapping $\val_{\min}$
to~$1$.

\begin{theorem}\label{thm:bic_general}
Given functional access to an algorithm~$\alg$ which incurs an expected 
social cost of $C(\alg)$, and when the values of all agents are distributed in $[1,h]$,
the reduction in \autoref{alg:bic-logh} outputs a BIC mechanism, which incurs an expected social
cost of $O(\log h) C(\alg)$ and recovers the cost in expectation.
\end{theorem}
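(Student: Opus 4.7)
The plan is to start from the BIC mechanism $\mech$ obtained by applying \autoref{cor:HL10} to $\alg$, which has expected social cost at most $SC(\alg)+\eps$ and, by \autoref{lem:Mye81}, admits monotone non-decreasing interim allocation curves $\intalloci(\cdot)$. I would then install a pre-computed reserve-style threshold $\mthresh$ that denies service to any agent reporting below~$\mthresh$. Because the truncated interim allocation $\intalloci(v)\cdot\mathbf{1}[v\geq\mthresh]$ is still monotone, the resulting mechanism $\mech_{\mthresh}$ is BIC, and its served set is a subset of $\mech$'s, so monotonicity of $C$ bounds its expected cost by $SC(\alg)+\eps$. By Myerson's payment formula, the interim payment of a served agent with value $v\geq\mthresh$ satisfies
\[
v\,\intalloci(v) - \int_{\mthresh}^{v}\intalloci(y)\,\dd y \;\geq\; \mthresh\cdot \intalloci(v),
\]
so the expected revenue of $\mech_{\mthresh}$ is at least $\mthresh\cdot\ex_\vals[|\mech_{\mthresh}(\vals)|]$.

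Next I would consider the geometric family $\mthresh_k = 2^k$ for $k=0,1,\ldots,\lceil\log h\rceil$ and select the smallest index $k^*$ such that $\mech_{\mthresh_{k^*}}$ recovers cost in expectation. Such a $k^*$ exists, because at $k=\lceil\log h\rceil$ the per-served price already dominates $h$, an upper bound on any per-agent cost. For the social-cost guarantee I would compare $\mech_{\mthresh_{k^*}}$ with $\mech$, observing that the only newly excluded agents are those $i\in\mech(\vals)$ with $v_i<\mthresh_{k^*}$, and that these contribute at most their value to the social cost. Decomposing $[1,\mthresh_{k^*})$ into the geometric scales $[2^{j},2^{j+1})$ for $j<k^*$ and charging each scale against the failure of cost recovery at that level, i.e.\ $\mthresh_{j}\cdot\ex[|\mech_{\mthresh_{j}}|]<\ex[C(\mech_{\mthresh_{j}})]\leq SC(\alg)+\eps$ for every $j<k^*$, I would conclude that each scale contributes $O(SC(\alg))$ to the exclusion loss and, summed across the $O(\log h)$ scales, the total social cost of $\mech_{\mthresh_{k^*}}$ is $O(\log h)\cdot SC(\alg)$.

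The main obstacle I anticipate is making this per-scale charging quantitatively tight. Because $\alg$ is arbitrary, the subpopulation of $\mech(\vals)$ whose values lie in $[2^{j},2^{j+1})$ need not coincide with the served set of $\mech_{\mthresh_j}$, so the revenue lower bound at a scale does not directly dominate the exclusion loss at that scale. I expect \autoref{alg:bic-logh} to sidestep this mismatch by structuring the pre-processing so that book-keeping at each scale lines up cleanly---for example by applying per-scale thresholds so that revenue and exclusion loss are indexed by the same geometric scale, or by absorbing the mismatch through an averaging/union-bound argument over the $O(\log h)$ candidate sub-mechanisms. Two secondary points should be routine: the truncated interim allocation is implementable via the HL10 resampling combined with a pointwise threshold check on the reported value, and the replacement of exact interim quantities by sampled estimates is deferred to \autoref{SEC:SAMPLE} by the opening remark of this section.
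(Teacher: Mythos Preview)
Your plan is the paper's proof. The obstacle you anticipate is illusory: by construction $\mech_{2^j}$ serves exactly $S_j(\vals)=\{i\in S(\vals):\vali\geq 2^j\}$, and the agents dropped at scale $j$ are precisely $S_j\setminus S_{j+1}\subseteq S_j$, so the scale-$j$ exclusion loss is at most $2^{j+1}\,\Ex{|S_j\setminus S_{j+1}|}\leq 2\cdot 2^j\,\Ex{|S_j|}$. Your failure inequality $2^j\,\Ex{|S_j|}<\Ex{C(S_j)}$ does follow (since $2^j\,\Ex{|S_j|}$ lower-bounds the actual revenue that, by choice of $k^*$, failed to cover cost at level $j$), and together with $\Ex{C(S_j)}\leq C(\alg)$ this completes the per-scale charge directly---no averaging or union bound over sub-mechanisms is needed. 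The paper argues identically, phrasing the failure as $\Ex{\sum_i p_{i,j}(\vali)}<\Ex{C(S_j)}$ and invoking $p_{i,j}(v)\geq 2^j\intalloci(v)$ for $v\geq 2^j$.

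One correction: your termination argument is wrong, as $h$ need not bound per-agent cost. The paper instead notes that at $k=1+\lfloor\log h\rfloor$ no value reaches $2^k$, so $S_k(\vals)=\emptyset$ and $C(\emptyset)=0$ is trivially covered.
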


\begin{proof}
\begin{algorithm}[t]
\SetKwInOut{Input}{Input}\SetKwInOut{Output}{Output}
\Input{A BIC algorithm $\alg$ and a valuation profile $\vals$}
\Output{A set of agents to be served, and a price for each agent}
\BlankLine
\nl \label{algstep:interim} Let $\levelset(\vals)$ = set of winners returned by $\alg$ on input $\vals$; compute the  interim allocation rule
$\intallocs$ in~$\alg$ \; 
% , and $\levelset_j(\vals) = \{i \in \levelset(\vals)| \vali \geq 2^j\}$\; 
\nl \label{algstep:findk} \For {$j$ = $0$ to $1+\lfloor\log h\rfloor$}{
\begin{align*}
\text{Let } \levelset_j(\vals) &= \{i \in \levelset(\vals) | \vali \geq 2^j \}\\
\text{Let } \price_{i,j}(\vali) &= \begin{cases} 
0 & \text{if $\vali < 2^j$} \\
\vali \intalloci(\vali) - \int_{2^j}^{\vali} \intalloci(y) \: \dd y & \text{if $\vali \geq 2^j$}
\end{cases}
\end{align*}
\If {$\Ex[\vals]{C(\levelset_j(\vals)} \leq \Ex[\vals]{\sum_i\price_{i, j}(\vali)}$}{
Set $k = j$\;
Go to step~\ref{algstep:output}.
}
}
\nl \label{algstep:output} Serve agents in $\levelset_k(\vals)$, and charge agent $i$ a price of
$\frac{\price_{i,k}(\vali)}{\intalloci(\vali)}$. 
\caption{A 
%black-box 
reduction from BIC cost-recovering-in-expectation social cost
minimization to BIC social cost minimization} 
\label{alg:bic-logh}
\end{algorithm}
By Theorem 3.1 of~\cite{HL10} (arxiv version), 
%\autoref{cor:HL10} and \autoref{sec:eps-bic}, 
it is without loss of generality to assume that the input
algorithm~$\alg$ is BIC, i.e., has a monotone increasing interim allocation rule. 
(Theorem 3.1 of~\cite{HL10} is a version of~\autoref{cor:HL10} where full functional
access to allocation rule is available, and shows that in such settings
there is not even an additive $\epsilon$ loss.) 
% Our reduction is described as \autoref{alg:bic-logh}.
\autoref{alg:bic-logh} proceeds in two phases. In
the preprocessing phase, it computes a number $k$ to be used in the next phase.  This phase does not
depend on the agents' actual valuations, and only uses information of the valuation distributions and the
algorithm~$\alg$.  In the second phase, the mechanism uses bidders' bids (declarations of their valuations) and~$k$ to
modify the set $\winset(\vals)$ returned by~$\alg$.  The actual set of winners and payments are determined in the second
phase.

In the preprocessing phase, the mechanism experiments with \emph{truncating} the interim allocation of~$\alg$ at
different thresholds.  To truncate an interim allocation rule $\intalloci$ at a threshold is to refuse service to all
agents that report values below the threshold, while keeping intact services to others.  The
resulting interim allocation rule is still monotone after truncation.  The payment
$\price_{i,j}(\vali)$ computed by the algorithm is simply the expected payment made by the agent in such a
truncated allocation rule (recall \autoref{lem:Mye81}). 

By the procedure outlined for computing $k$ in \autoref{alg:bic-logh}, it
follows that the resulting mechanism recovers the cost in expectation, i.e.,
when the ``if'' condition in the algorithm becomes true, cost is recovered in
expectation. (Note that there is always a $k$ for which the ``if'' condition
becomes true: at $k = 1+\lfloor \log h \rfloor$, we have $S_k(\vals) =
\emptyset$, and the ``if'' condition becomes true.) In addition, for all
$j \in 0\dots k-1$, we have
\begin{align}
\label{eq:smallj}
\Ex[\vals]{\sum_i \price_{i,j}(\vali)} < \Ex[\vals]{C(S_j(\vals))}.
\end{align}

We claim that, in expectation, the additional social cost incurred by the mechanism when dropping the agents in~$\levelset(\vals) \setminus
\levelset_k(\vals)$ is bounded by an $O(\log h)$ factor times $C(\alg)$.  To begin with, the expected social cost of the mechanism is 
\begin{align*}
\Ex[\vals]{C(\levelset_k(\vals)) + \sum_{i \notin \levelset(\vals)} \vali\alloci(\vals) + 
\sum_{i\in \levelset(\vals) \setminus \levelset_k(\vals)} \vali}.
\end{align*}  
Since $\levelset_k(\vals) \subseteq \levelset(\vals)$, the first two terms are upper bounded by $C(\alg)$.  We therefore need only to bound the last term.
Looking at this term from each agent's perspective, we have
\begin{align*}
\Ex[\vals]{\sum_{i\in \levelset(\vals) \setminus \levelset_k(\vals)} \vali} = \sum_i \sum_{j=0}^{k-1}
\int_{2^j}^{2^{j+1}} \vali \intalloci(\vali) \densi(\vali)\,\dd\vali \leq 2 \sum_i \sum_{j=0}^{k-1} \int_{2^j}^{2^{j+1}}
2^{j} \intalloci(\vali) \densi(\vali)\, \dd\vali.
\end{align*}
Since $\price_{i,j}(\vali) = \vali \intalloci(\vali) - \int_{2^j}^{\vali} \intalloci(y) \: \dd y \geq 2^j \intalloci(\vali)$ for
$\vali \geq 2^j$, we have
\begin{align*}
 \sum_i \sum_{j=0}^{k-1} \int_{2^j}^{2^{j+1}} 2^{j} \intalloci(\vali) \densi(\vali)\, \dd\vali \leq \sum_{j=0}^{k-1}
\Ex[\vals]{\sum_i \price_{i,j}(\vali)}.
\end{align*}

But by \eqref{eq:smallj}, for each $j < k$, $\Ex[\vals]{\sum_i \price_{i, j}(\vali)}$ is in turn bounded by
$\Ex[\vals]{C(\levelset_j(\vals))} \leq \Ex[\vals]{C(\levelset(\vals))} \leq C(\alg)$.  As there are only $k \leq \log h$
such $j$'s,  the additional social cost is at most $O(\log h)$ times $C(\alg)$.

\end{proof}

We now present the other reduction that gives better approximations when $n$ is smaller than~$h$. 

\begin{theorem}
\label{thm:bic_general_logn}
Given functional access to an arbitrary algorithm~$\alg$ which incurs an expected 
social cost of $C(\alg)$, the reduction in \autoref{alg:bic-logn} outputs a BIC mechanism, which incurs an expected social
cost of $O(\log n) C(\alg)$ and recovers the cost in expectation.
\end{theorem}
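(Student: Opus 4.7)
The plan is to mirror the proof of~\autoref{thm:bic_general} but to replace the universal value-doubling thresholds $2^j$ with agent-specific thresholds defined in interim-allocation space, so that the number of levels depends on $n$ rather than on $h$. By~\autoref{cor:HL10} we may assume $\alg$ is BIC, so each $\intalloci$ is non-decreasing. For each agent $i$ and each level $j \in \{0,1,\ldots,1+\lceil \log n\rceil\}$, I would define
\[
\mthresh_{i,j} \;=\; \inf\{\, v \geq 0 \,:\, \intalloci(v) \geq 2^{-j}\,\},
\]
with the convention $\mthresh_{i,j} = +\infty$ if $\intalloci$ never attains $2^{-j}$. These thresholds are monotone in~$j$, and crucially there are only $O(\log n)$ of them, independent of the support of the valuations.

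\paragraph{The algorithm.} Algorithm~\ref{alg:bic-logn} would parallel \autoref{alg:bic-logh} structurally. At each level~$j$ it forms the truncated serving set $\levelset_j(\vals) = \{\,i \in \levelset(\vals) : \vali \geq \mthresh_{i,j}\,\}$ with Myerson payment $\price_{i,j}(\vali) = \vali\, \intalloci(\vali) - \int_{\mthresh_{i,j}}^{\vali} \intalloci(y)\,\dd y$ for $\vali \geq \mthresh_{i,j}$ (zero otherwise). It then scans the $j$'s and picks $k$ to be the first level at which the cost-recovery condition $\Ex_\vals[\sum_i \price_{i,j}(\vali)] \geq \Ex_\vals[C(\levelset_j(\vals))]$ holds, outputting $\levelset_k$ with per-served-agent price $\price_{i,k}(\vali)/\intalloci(\vali)$. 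The mechanism is BIC because each truncated interim allocation remains monotone, and cost recovery in expectation is enforced directly by the stopping rule; termination at or before the extremal level is guaranteed because at that $j$ either $\levelset_j(\vals) = \emptyset$ trivially or the per-agent payment $\mthresh_{i,j}\intalloci(\vali)$ dominates the residual cost.

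\paragraph{Social cost bound.} The expected social cost decomposes as
$
\Ex[C(\levelset_k)] + \Ex[\sum_{i \notin \levelset}\vali] + \Ex[\sum_{i \in \levelset\setminus \levelset_k}\vali],
$
and the first two summands are already controlled by $C(\alg)$ (since $\levelset_k \subseteq \levelset$ and $C$ is monotone). For the third summand I would write it as $\sum_i \int_0^{\mthresh_{i,k}}\vali\,\intalloci(\vali)\,\densi(\vali)\,\dd \vali$ and decompose each $[0,\mthresh_{i,k}]$ dyadically along the consecutive thresholds $\{\mthresh_{i,j}\}$, on whose adjacent segments $\intalloci(\vali)$ lies in a halving interval $[2^{-j-1},2^{-j})$. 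Then, combining the elementary bound on $\vali$ in terms of the enclosing threshold with the payment lower bound $\price_{i,j}(\vali) \geq \mthresh_{i,j}\,\intalloci(\vali)$, each dyadic piece can be charged against the payment at the appropriate level~$j$, which in turn is bounded by $C(\alg)$ via the failure of cost recovery $\Ex_\vals[\sum_i \price_{i,j}(\vali)] < C(\alg)$ at that level. Summing over the $O(\log n)$ active levels yields the desired $O(\log n)\, C(\alg)$ bound.

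\paragraph{Main obstacle.} The hardest step is aligning the direction of the scan with the dyadic decomposition so that each exclusion-loss term matches a failure-of-cost-recovery payment bound at the \emph{same} level. In the $O(\log h)$ proof this alignment was automatic because the value thresholds $2^j$ doubled uniformly across agents; here the doubling is in interim-allocation space, so the agent-specific gaps $\mthresh_{i,j+1}/\mthresh_{i,j}$ can be arbitrary, and the natural payment inequality $\price_{i,j}(\vali)\geq \mthresh_{i,j}\,\intalloci(\vali)$ pertains to the tail $[\mthresh_{i,j},\infty)$ rather than to the adjacent dyadic interval. Closing this gap--together with absorbing the residual contribution from the very smallest interim-allocation range (where $\intalloci < 1/(2n)$, bounded by a constant times $C(\alg)$ by a direct calculation)--will require choosing the scan direction and the telescoping carefully, and is where most of the technical work of the proof lies.
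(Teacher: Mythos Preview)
Your proposed reduction is not the one the paper uses, and the obstacle you flag is a genuine gap rather than a detail to be filled in. Dyadic decomposition in allocation space gives $\intalloci(\vali)\in[2^{-(j+1)},2^{-j})$ on the slab $[\mthresh_{i,j+1},\mthresh_{i,j})$ but no control over $\vali$ itself: the ratio $\mthresh_{i,j}/\mthresh_{i,j+1}$ can be arbitrary. Hence the charging step that drives the $O(\log h)$ proof has no analogue here. The only payment lower bound available at level $j{+}1$ on that slab is $p_{i,j+1}(\vali)\ge \mthresh_{i,j+1}\,\intalloci(\vali)$, which cannot absorb the factor $\vali\le \mthresh_{i,j}$; and the sharper bound $p_{i,j+1}(\vali)\ge \vali\,\intalloci(\vali)-(\vali-\mthresh_{i,j+1})\cdot 2^{-j}$ can be negative throughout the slab. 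Your residual estimate for the region $\intalloci<1/(2n)$ is fine, and BIC plus cost-recovery-in-expectation follow as you say, but the social-cost bound does not go through, and I do not see a repair that stays within the allocation-doubling framework.

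The paper's \autoref{alg:bic-logn} takes a different route. It uses a \emph{single} value-space threshold $\thresh_j$, common to all agents, set adaptively to the expected average cost per served agent at the previous level: $\thresh_j=\big\lceil \Ex[\vals]{C(\levelset_{j-1}(\vals))}/\Ex[\vals]{|\levelset_{j-1}(\vals)|}\big\rceil_\delta$. Since every agent served at level~$j$ pays at least $\thresh_j$, cost recovery holds once $\thresh_j\cdot\Ex[\vals]{|\levelset_j(\vals)|}\ge\Ex[\vals]{C(\levelset_j(\vals))}$; if it fails one shows $\thresh_{j+1}\ge \thresh_j+\delta$, so the loop terminates in $O(1/\delta)$ steps. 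The $O(\log n)$ bound does \emph{not} come from a dyadic argument. Instead the proof passes to a subsequence of levels $r(0)<r(1)<\dotsb$ at which the expected served-set size drops by at least one (so there are at most $n$ of them), sets $\alpha_j=\Ex[\vals]{|\levelset_{r(j-1)}(\vals)|}-\Ex[\vals]{|\levelset_{r(j)}(\vals)|}\ge 1$, bounds the excluded value by $\sum_j \thresh_{r(j)}\alpha_j$, and uses $\thresh_{r(j)}\lesssim \Ex[\vals]{C(\levelset(\vals))}/\Ex[\vals]{|\levelset_{r(j)-1}(\vals)|}$ to reduce this to $\Ex[\vals]{C(\levelset(\vals))}\cdot\sum_j \alpha_j\big/\big(\Ex[\vals]{|\levelset(\vals)|}-\sum_{\ell<j}\alpha_\ell\big)$, which is $O(\log n)$ by an elementary harmonic-sum lemma. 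The logarithm thus arises from unit decrements in expected set size, not from doubling in any space.
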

\begin{proof}
\begin{algorithm}[t]
\SetKwInOut{Input}{Input}\SetKwInOut{Output}{Output}
\Input{A BIC algorithm $\alg$; $C(S)$ for every set $S$ of agents; a value $\delta > 0$}
\Output{A set of agents to be served, and a price for each agent}
\BlankLine
\nl \label{algstep:interim-logn} Initialize $\levelset_0(\vals) = \levelset(\vals)$ = set of winners returned by $\alg$ on input $\vals$, calculate interim allocation rule
$\intallocs$ in~$\alg$\; 
% , and $\levelset_j(\vals) = \{i \in \levelset(\vals)| \vali \geq 2^j\}$\; 
\nl \label{algstep:findk-logn} \For {$j$ = $0, 1, 2, \dotsc$}{
\begin{align*}
\text{Let } \thresh_j &= \left\lceil\frac{\Ex[\vals]{C(\levelset_{j-1}(\vals))}}{\Ex[\vals]{|\levelset_{j-1}(\vals)|}}\right\rceil_{\delta}, \text{ where $\thresh_0 = 0$ }\\
%, \text{ where $\rud(w) := \lceil w/\delta \rceil\delta $}\\
\text{Let } \levelset_j(\vals) &= \{i \in \levelset(\vals) | \vali \geq \thresh_{j} \}\\ 
\text{Let } \price_{i,j}(\vali) &= \begin{cases} 
0 & \text{if $\vali < \thresh_j$} \\
\vali \intalloci(\vali) - \int_{\thresh_j}^{\vali} \intalloci(y) \: \dd y & \text{if $\vali \geq \thresh_j$}
\end{cases}
\end{align*}
\If {$\Ex[\vals]{C(\levelset_j(\vals)} \leq \Ex[\vals]{\sum_i\price_{i, j}(\vali)}$}{
Set $k = j$\;
Go to step~\ref{algstep:output-logn}.
}
}
\nl \label{algstep:output-logn} Serve agents in $\levelset_k(\vals)$, and charge agent $i \in \levelset_k(\vals)$ a price of
$\tfrac{\price_{i,k}(\vali)}{\intalloci(\vali)}$.
\caption{A 
%black-box 
reduction from BIC cost-recovering-in-expectation social cost
minimization to BIC social cost minimization} 
\label{alg:bic-logn}
\end{algorithm}
% , except that $\levelset_j$ is now defined inductively
% as follows: let $\levelset_0(\vals)$ be $\levelset(\vals)$, and $\levelset_{j}(\vals) = \{i \in \levelset(\vals): \vali
% \geq \thresh_j\}$, where $\thresh_j = \frac{\Ex[\vals]{C(\levelset_{j-1}(\vals))}}{\Ex[\vals]{|\levelset_{j-1}(\vals)|}}$.  Let
% $\thresh_0$ be $0$ for formality.  Again, $\price_{i,j}(\vals)$
% is the interim payment of agent~$i$ with the interim allocation function truncated at
% $\thresh_j$, and $k$ is the minimum integer
% such that $\Ex[\vals]{C(\levelset_k(\vals))} \leq \Ex[\vals]{\sum_i \price_{i,k}(\vali)}$.  By truncating the interim
% allocation curves at $\thresh_k$, the mechanism guarantees that the cost is recovered in expectation.  
The idea behind the reduction is similar to that of \autoref{thm:bic_general}.
The main difference is in the definition of the sets $\levelset_j(\vals)$.  They are defined inductively, 
as sets of agents whose value is above the average cost threshold. 
%and each successive $\levelset_j$ sets a larger threshold to truncate the interim allocation rule.  

By the same argument as before, the mechanism recovers the cost in expectation
by its definition of~$k$. In addition, we show that the algorithm will
terminate after $O(1/\delta)$ steps. If for some $j$ we had $t_j\leq
t_{j-1}$ then by definition of $t_j$ we have: 
\begin{align*}
\frac{\Ex[\vals]{C(\levelset_{j-1}(\vals))}}{\Ex[\vals]{|\levelset_{j-1}(\vals)|}}\leq\left\lceil\frac{\Ex[\vals]{C(\levelset_{j-1}(\vals))}}{\Ex[\vals]{|\levelset_{j-1}(\vals)|}}\right\rceil_{\delta}=t_{j}\leq t_{j-1},
\end{align*}
which in turn implies:
\begin{align*}
\Ex[\vals]{C(\levelset_{j-1}(\vals))}\leq t_{j-1}\Ex[\vals]{|\levelset_{j-1}(\vals)|}\leq \Ex[\vals]{\sum_i\price_{i, j-1}(\vali)},
\end{align*}
where the last inequality follows from noting that at iteration $j-1$ the
payment of any player that is served is at least $t_{j-1}$. Thus at that point
it must be that the algorithm terminated at iteration $j-1$, since the ``if"
condition gets satisfied. Therefore, it follows that as long as the algorithm
has not terminated, $t_j>t_{j-1}$, and since by definition the thresholds are multiples of
$\delta$ \footnote{$\lceil x \rceil_{\delta}$ is the smallest multiple of $\delta$ that is larger than $x$} it must be that $t_j\geq t_{j-1}+\delta$.
Thus the algorithm will terminate after $O(1/\delta)$ steps (since values lie
in $[0,1]$).

To complete the proof, similar to \autoref{thm:bic_general}, we need to bound the term
$\Ex[\vals]{\sum_{i \in \levelset(\vals) \setminus \levelset_k(\vals)} \vali}.$
%\begin{align*}
%\Ex[\vals]{\sum_{i \in \levelset(\vals) \setminus \levelset_k(\vals)} \vali}
%= \sum_i 
%\int_{0}^{\max_{j\in[k]}\thresh_{j}} \vali \intalloci(\vali) \densi(\vali) \: \dd \vali \leq  \sum_{j=1}^{k}\thresh_{j} \Ex[\vals]{|\levelset_{j-1}(\vals)| -
%|\levelset_{j} (\vals)|}.
%\end{align*}
Define $r(l)$ inductively as follows: $r(0) = 0$, and 
\begin{align*}
r(j) = \begin{cases}\min \left\{\ell: \Ex[\vals]{|\levelset_{r(j-1)}(\vals)| -|\levelset_{\ell}(\vals)|} \geq 1\right\} & \text{if such an $\ell$ ($\leq k$) exists}\\
k & \text{otherwise}\end{cases}
\end{align*} 
Note that 
$r(j) > r(j-1)$, and $\thresh_{r(j)} > \thresh_{r(j-1)}.$
Let $j_{\max}$ be the smallest $j$ for which $r(j) = k$. Note that $j_{\max} \leq n$ since there are at most $n$ agents, i.e., 
$\Ex[\vals]{|\levelset(\vals)|} \leq n$ and therefore the number of times that the expected service set size can decrease is at most $n$. 
Let $\mass_j = \Ex[\vals]{|\levelset_{r(j-1)}(\vals)| - |\levelset_{r(j)}(\vals)|}$. By definition $\mass_j \geq 1$ for all $j < j_{\max}$. 
%except potentially for $j_{\max}$. 
%Define $\rho(j)$ along the lines of $r(j)$ inductively as follows: $\rho(0) = 0$; 
%$\rho(j)$ is the smallest $l \in (r(j-1),r(j)]$ such that $t_l \geq t_{l'}$ for all $l' \in  (r(j-1),r(j)]$.
%Clearly, we have $\thresh_{\rho(j)} > \thresh_{\rho(j-1)}$. 
We have, 
\begin{align*}
\Ex[\vals]{\sum_{i \in \levelset(\vals) \setminus \levelset_k(\vals)} \vali} 
&\leq \sum_i \sum_{j = 1}^{j_{\max}}
\int_{\thresh_{r(j-1)}}^{\thresh_{r(j)}} \vali \intalloci(\vali) \densi(\vali) \: \dd \vali \\
&\leq \sum_i \sum_{j = 1}^{j_{\max}}
\int_{\thresh_{r(j-1)}}^{\thresh_{r(j)}} \thresh_{r(j)}\intalloci(\vali) \densi(\vali) \: \dd \vali \\
&\leq  \sum_{j=1}^{j_{\max}}\thresh_{r(j)} \Ex[\vals]{|\levelset_{r(j-1)}(\vals)| -|\levelset_{r(j)} (\vals)|}\\
&= \sum_{j=1}^{j_{\max}}\thresh_{r(j)}\mass_j \leq \sum_{j=1}^{j_{\max}}\left\lceil\frac{\Ex[\vals]{C(\levelset_{r(j) -1
}(\vals))}}{\Ex[\vals]{|\levelset_{r(j) - 1}(\vals)|}}\right\rceil_{\delta}\alpha_j\\
&\leq \sum_{j=1}^{j_{\max}}\left[\frac{\Ex[\vals]{C(\levelset_{r(j) -1
}(\vals))}}{\Ex[\vals]{|\levelset_{r(j) - 1}(\vals)|}}+\delta\right]\alpha_j\\
&\leq  \sum_{j=1}^{j_{\max}}\frac{\Ex[\vals]{C(\levelset(\vals))}}{\Ex[\vals]{|\levelset(\vals)|} -
\sum_{\ell<j} \mass_\ell}\mass_j + n\delta
\leq O(\log n)\Ex[\vals]{C(\levelset(\vals))} + n\delta, \notag
\end{align*}
where the last-but-one inequality follows from noting that
$\sum_{j=1}^{j_{\max}} \mass_j \leq \Ex[\vals]{|\levelset(\vals)|} \leq n$. 

For the last inequality, begin by noting that $j_{\max} \leq n$ and for all $j<j_{\max}$, $\mass_j \geq 1$. 
We need to show that $\sum_{j=1}^{j_{\max}}\frac{\mass_j}{\Ex[\vals]{|\levelset(\vals)|} -
\sum_{\ell<j} \mass_\ell} \leq O(\log n)$. Note that $\sum_{j=1}^{j_{\max}}\frac{\mass_j}{\Ex[\vals]{|\levelset(\vals)|} -
\sum_{\ell<j} \mass_\ell} \leq
\sum_{j=1}^{j_{\max}}\frac{\mass_j}{\sum_{j\leq\ell\leq j_{\max}} \mass_\ell } \leq 1 + \sum_{j=1}^{j_{\max}-1}\frac{\mass_j}{\sum_{j\leq\ell\leq j_{\max}-1} \mass_\ell }$. In the final summation, all the $\alpha_j$'s involved are at least $1$. The following claim completes the proof of the last inequality since we have $\sum_{j=1}^{j_{\max}} \mass_j \leq n$. 
\begin{claim}\label{cl:log-n}
Given $k$ real numbers $a_1,\dots, a_k$, such that $a_i \geq 1$ for all $i$, 
$$\sum_{j=1}^{k} \frac{a_j}{\sum_{t\geq j} a_t} \leq 2\cdot H_{\sum_{j=1}^{k}\lfloor a_j \rfloor },$$
where $H_r = \sum_{i=1}^{r} \frac{1}{i} \leq 1 + \log r$. 
\end{claim}
\begin{proof}
{\allowdisplaybreaks\begin{align*}
\sum_{j=1}^{k} \frac{a_j}{\sum_{t\geq j} a_t} &= \sum_{j=1}^{k} \frac{\lfloor a_j\rfloor}{\sum_{t\geq j} a_t} + 
\sum_{j=1}^{k} \frac{a_j - \lfloor a_j\rfloor}{\sum_{t\geq j} a_t}\\
&\leq \sum_{j=1}^{k} \frac{\lfloor a_j\rfloor}{\sum_{t\geq j} \lfloor a_t \rfloor } + \sum_{j=1}^{k} \frac{1}{\sum_{t\geq j} a_t}\\
&\leq \sum_{j=1}^{k} \frac{\lfloor a_j\rfloor}{\sum_{t\geq j} \lfloor a_t \rfloor } + H_k \\
&\leq \sum_{j=1}^{k} \frac{\lfloor a_j\rfloor}{\sum_{t\geq j} \lfloor a_t \rfloor } + H_{\sum_{j=1}^{k}\lfloor a_j\rfloor}
\end{align*}}

We now show that $\sum_{j=1}^{k} \frac{\lfloor a_j\rfloor}{\sum_{t\geq j} \lfloor a_t \rfloor } \leq H_{\sum_{j=1}^{k}\lfloor a_j\rfloor}.$
We drop the floors, and assume that the $a_j$'s are integers in the part below. Consider the term $\frac{a_j}{\sum_{t\geq j} a_t}.$
{\allowdisplaybreaks\begin{align*}
\frac{a_j}{\sum_{t\geq j} a_t} &= \underbrace{\frac{1}{a_j+a_{j+1}+\ldots+a_{k}}+\ldots+\frac{1}{a_j+a_{j+1}+\ldots+a_{k}}}_{a_j \text{ times}}\\
&\leq \frac{1}{1+a_{j+1}+\ldots+a_{k}}+\frac{1}{2+a_{j+1}+\ldots+a_{k}}+\ldots+\frac{1}{a_j+a_{j+1}+\ldots+a_{k}}\\
&= \sum_{t=1+\sum_{k> j}a_k }^{a_j+\sum_{k> j}a_k}\frac{1}{t}
\end{align*}}

So we have,
\begin{align*}
\sum_{j=1}^{k} \frac{a_j}{\sum_{t\geq j} a_t} \leq \sum_{j=1}^{k} \sum_{t=1+\sum_{k> j}a_k}^{\sum_{k\geq j}a_k}\frac{1}{t} = \sum_{t=1}^{\sum_j a_j}\frac{1}{t} = H_{\sum_j a_j}
\end{align*}
\end{proof}

On choosing $\delta = \epsilon/n$, we approximate social cost to a factor of $O(\log n)$ with an additive loss of $\epsilon$. 
The number of iterations is at most $\frac{1}{\delta} = \frac{n}{\epsilon}$ because after $1/\delta$ iterations
the threshold would have reached $1$. 
\end{proof}

\section{Sampling and the Black-Box Model}
\label{SEC:SAMPLE}

The mechanisms of \autoref{SEC:BIC} work under the assumption that the mechanism designer has complete knowledge of the interim allocation rules and valuation distributions in functional form, and can perform arbitrary calculus on those functions.  This is a strong assumption; in general it may be highly non-trivial to precisely determine the interim allocation rules of an arbitrary algorithm.  In this section we describe ways to implement the reductions in \autoref{SEC:BIC} in a more realistic model: the algorithm $\alg$ is provided as a black box that can be queried on arbitrary input vectors.  We refer to this as the \emph{black-box} model of computation.

Our approach will be to estimate the allocation rules of $\alg$ via sampling, then apply the reductions from the ideal model.  This introduces sampling error that must be bounded; the result will be a mechanism that is \emph{approximately} cost-recovering.  We will then show how to modify our constructions to be cost-recovering in the non-approximate sense.  The following theorem summarizes the result.

\begin{theorem}
\label{thm:bic.blackbox}
Given $\epsilon > 0$, black-box access to algorithm~$\alg$ and distribution $\dists$, 
one can construct a BIC mechanism $\mathcal{M}$ that is cost-recovering in expectation, 
with $\ex_{\vals}[SC(\mathcal{M})] \leq O(\min\{\log(h), \log(n)\}) \ex_{\vals}[SC(\alg)] + \epsilon$.
The mechanism runs in time polynomial in $1/\eps$, $n$, and the runtime of $\alg$.
\end{theorem}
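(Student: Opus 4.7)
The plan is to run the reductions of \autoref{alg:bic-logh} and \autoref{alg:bic-logn} on estimates, obtained by sampling, of the two quantities they rely on: (i) the interim allocation curves $\intalloci(\cdot)$ of $\alg$, and (ii) the expected cost $\Ex[\vals]{C(\levelset_j(\vals))}$ and expected payments $\Ex[\vals]{\sum_i \price_{i,j}(\vali)}$ for each candidate threshold $j$. We then inflate the stopping condition by a small slack to convert \emph{approximate} cost recovery into \emph{exact} cost recovery in expectation.

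First, I would discretize the value interval $[0,1]$ (or $[1,h]$ after scaling) into a grid of width $\eta=\mathrm{poly}(\eps/(n\log h))$, and for each grid point $v$ and each agent $i$ estimate $\intalloci(v)$ by drawing $N=\mathrm{poly}(n/\eps)$ samples of $\valmi\sim\dists_{-i}$ and averaging $\alloci(v,\valmi)$. A union bound over agents, grid points, and thresholds together with a Chernoff bound yields estimates $\hat{\intalloci}$ that are within $\eta$ of $\intalloci$ pointwise with probability $1-\eps/\mathrm{poly}$. Off-grid, $\hat{\intalloci}$ is defined by linear interpolation and then ``monotonized'' by replacing it with its least monotone upper envelope, so that the Hartline--Lucier construction (\autoref{thm:HL10} and \autoref{cor:HL10}) turns the resulting allocation rule into a bona fide BIC mechanism whose social cost is within an additive $\eps$ of $SC(\alg)$. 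From $\hat{\intalloci}$ we obtain estimates $\hat\price_{i,j}$ of the Myerson payments by numerical integration on the grid; because the integrand is bounded and monotone, the error in $\hat\price_{i,j}$ is also $O(\eta)$. Likewise, I would estimate $\Ex[\vals]{C(\levelset_j(\vals))}$ and $\Ex[\vals]{\sum_i \hat\price_{i,j}(\vali)}$ to additive accuracy $\eta$ by drawing samples of full profiles $\vals$ and running $\alg$ on each.

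Second, in the inner loop of \autoref{alg:bic-logh} or \autoref{alg:bic-logn} I would replace the test
\[
\Ex[\vals]{C(\levelset_j(\vals))} \leq \Ex[\vals]{\sum_i \price_{i,j}(\vali)}
\]
by the inflated test
\[
\widehat{\Ex}[C(\levelset_j)] + \xi \;\leq\; \widehat{\Ex}\Bigl[\sum_i \hat\price_{i,j}\Bigr],
\]
for a slack $\xi = \Theta(\eta)$ chosen large enough that, on the good event (all estimates within $\eta$), the true expected payments exceed the true expected cost, so cost recovery holds conditional on the good event. On the bad event (probability $\le \eps/\mathrm{poly}$), the mechanism may fail to recover cost, but because $C$ is bounded this contributes only $O(\eps)$ to the expected shortfall; I would prefund this shortfall by an extra uniform $O(\eps)$ surcharge added to $\xi$ (equivalently, a universal additive term built into $\hat\price_{i,j}$), which restores exact cost recovery in expectation while adding only $O(\eps)$ to the interim payments and hence $O(\eps)$ to the social cost.

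Third, I would re-run the accounting of \autoref{thm:bic_general} and \autoref{thm:bic_general_logn} with $\price_{i,j}$ replaced by $\hat\price_{i,j}$ and thresholds replaced by their estimated counterparts. The inequalities $\price_{i,j}(\vali)\ge t_j\intalloci(\vali)$ and $\Ex[\vals]{C(\levelset_j)}\le C(\alg)$ are both stable under $O(\eta)$ perturbation, so the original $O(\log h)$ and $O(\log n)$ bounds go through with an additional additive $O(n \cdot \eta + \xi) = O(\eps)$ inflation, and the termination argument of \autoref{alg:bic-logn} (thresholds increasing by at least $\delta$) is unaffected once $\delta$ dominates the sampling error. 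Taking the better of the two bounds gives the claimed $O(\min\{\log h,\log n\})$ factor plus an additive $\eps$. Running time is polynomial in $n$, $1/\eps$ and the runtime of $\alg$ because the grid, the number of samples, and the number of threshold iterations are all polynomial in these parameters.

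The main obstacle, and the one I would handle most carefully, is the interaction between sampling error and exact (rather than approximate) cost recovery: naively, an $\eta$-error in the estimated payments could leave the mechanism with an $\Omega(\eta)$ expected deficit per realization. The fix is the combined slack/surcharge $\xi$ together with the tail bound on the bad event, which converts approximate recovery into exact recovery in expectation while charging only $O(\eps)$ to the social-cost objective. A secondary subtlety is that sampled interim curves need not be monotone; reducing to a genuinely monotone rule (and hence a genuinely BIC mechanism) uses the monotone-envelope trick together with \autoref{cor:HL10}, both of which only cost an additional $O(\eps)$.
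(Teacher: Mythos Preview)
Your high-level plan matches the paper's: sample to estimate the interim curves, run the threshold loop of \autoref{alg:bic-logh} or \autoref{alg:bic-logn} on the estimates, and tighten the stopping test by a slack so that true expected payments cover true expected cost despite sampling error. The stability argument you sketch is essentially \autoref{lem:sample-intalloc}.

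A few points are muddled, however, and would need repair. First, the order of operations: the paper applies \autoref{cor:HL10} to the \emph{algorithm} $\alg$ before sampling, so that the mechanism actually run (threshold the output of the monotonized algorithm) is guaranteed to have a monotone interim rule; taking a monotone envelope of the \emph{estimated} curves $\hat{\intalloci}$ is a separate step done only so the threshold loop behaves sensibly---those curves are not the allocation rule of any mechanism, and one cannot ``apply Hartline--Lucier'' to them. Second, your fallback ``uniform $O(\eps)$ surcharge \ldots\ built into $\hat p_{i,j}$, adding $O(\eps)$ to the interim payments and hence $O(\eps)$ to the social cost'' is off on two counts: literally adding a constant to charged prices breaks IR, and payments do not enter the social-cost objective at all. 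The correct fix---which you also state---is purely to tighten the stopping condition to $\Ex{\sum_i p_{i,j}} \ge \Ex{C(\levelset_j)} + \eps_0$, which raises the chosen threshold and therefore the genuine Myerson payments; the paper does exactly this. Finally, you never say how the mechanism actually charges exactly-BIC payments in the black-box model; the paper uses the unbiased single-sample estimator of \citet{APTT03} (\autoref{sec:sample-price}), not numerical integration of $\hat{\intalloci}$, which would yield only $\eps$-BIC prices.
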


\subsection{Computing BIC payments}
\label{sec:sample-price}
Suppose that we are given an algorithm $\alg$ with monotone interim allocation rules $\intallocs$, and moreover we are
told that charging the expected payments of \autoref{lem:Mye81}, $\pricei(\vali) = \vali\intalloci(\vali) -
\int_0^{\vali} \intalloci(y)\,\dd y$, would recover costs in expectation.  In this case, all that would be required 
to obtain a BIC mechanism is to execute algorithm $\alg$ and compute payments so that the expected
payment of agent~$i$ is $\price_i(\vali)$.  However, in the black-box model the mechanism can determine the value of the allocation rules (and hence the required payments) only approximately; charging approximate payments is insufficient for Bayesian incentive compatibility.

%Our algorithms~\ref{alg:bic-logh}and~\ref{alg:bic-logn} in their step 3 require computing the ex-post payments $\price_{i,k}(\vals) = \vali\outi(\vals) - \int_0^{\vali}\outi(y,\vals_{-i})\,dy$. However, the integral in the payment term requires knowing the allocation function $\outi(\vals)$ at infinitely many inputs, namely, for each $(y,\vals_{-i})$ for $y\in[0,\vali]$. 
There is a well-known procedure to estimate integrals via random sampling, used by \citet{APTT03} to compute payments.  For the purpose of having a self-contained exposition, we explain the procedure below. 
\begin{theorem}
Let $\dens(\cdot)$ be the probability density function of a random variable $Y \in [0,v]$. Then
$\Ex[Y]{\tfrac{h(Y)}{\dens(Y)}} = \int_{0}^{v} h(z)\, \dd z$. 
\end{theorem}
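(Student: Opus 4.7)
The statement is essentially an instance of the law of the unconscious statistician, so my plan is to simply unfold the definition of expectation with respect to the density $\dens$ and observe that the factors of $\dens(Y)$ cancel. Concretely, I would write
\[
\Ex[Y]{\frac{h(Y)}{\dens(Y)}} \;=\; \int_0^v \frac{h(z)}{\dens(z)}\, \dens(z)\, \dd z \;=\; \int_0^v h(z)\, \dd z,
\]
which is the claimed identity. No induction, no limits, no probabilistic inequalities are required.

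The only subtlety worth flagging is well-definedness of the integrand: $h(z)/\dens(z)$ is a priori undefined where $\dens(z) = 0$. The standard remedy, which I would include as a single sentence, is to interpret the expectation as an integral over the support $\{z \in [0,v] : \dens(z) > 0\}$; points where $\dens(z) = 0$ contribute nothing to the expectation (they are a null set for $Y$) and also nothing to $\int_0^v h(z)\, \dd z$ provided $h$ is integrable, so the identity holds without change. I would also implicitly assume the mild regularity condition that $h/\dens$ is measurable and $h$ is Lebesgue-integrable on $[0,v]$ so that the interchange of integral and division is valid; for the intended application to estimating the payment integral this is immediate.

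There is no real obstacle here; the content of the lemma is precisely to license an importance-sampling-style unbiased estimator of $\int_0^v h(z)\, \dd z$ by drawing $Y \sim \dens$ and evaluating $h(Y)/\dens(Y)$. In the subsequent application (following Archer--Papadimitriou--Tardos--Talwar), one will take $h(z) = \intalloci(z)$ on $[0,\vali]$ and use $Y$ drawn uniformly on $[0,\vali]$ (so $\dens(z) = 1/\vali$), yielding an unbiased estimator $\vali \cdot \intalloci(Y)$ of $\int_0^{\vali} \intalloci(y)\, \dd y$; this is the motivating use case and I would mention it in one closing sentence to orient the reader toward Section~\ref{sec:sample-price}'s needs.
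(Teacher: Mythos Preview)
Your proposal is correct and matches the paper's own treatment exactly: the paper simply remarks that the theorem ``follows from the definition of a probability density function,'' which is precisely your one-line cancellation computation, and then immediately specializes to the uniform density on $[0,\vali]$ as you anticipate. One small inaccuracy in your side remark: it is not true in general that points where $\dens(z)=0$ contribute nothing to $\int_0^v h(z)\,\dd z$---if $\dens$ vanishes on a set of positive Lebesgue measure where $h$ does not, the identity actually fails---but this is moot for the intended application since the uniform density is everywhere positive on $[0,\vali]$.
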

The proof of the theorem follows from the definition of a probability density function.  Thus one way to estimate the
integral $\int_0^{\vali} \intalloci(y)\, \dd y$ is to draw a random variable $Y$ from the uniform distribution
$U[0,\vali]$, and return $\vali \intalloci(Y)$. In expectation, this quantity precisely equals
$\int_0^{\vali}\intalloci(y)\, \dd y$. Furthermore, the payment of $\vali\intalloci(\vali) - \vali\intalloci(Y)$ is
always non-negative since $\intalloci(\cdot)$ is monotone, and thus the mechanism is ex-post IR. 

%Given that payments can be computed, it suffices to construct algorithm~$\alg$ with monotone allocation rules such that charging Myerson payments recovers costs in expectation.

\subsection{Estimating Interim Allocation via Sampling}
\label{sec:sample-intalloc}

We now describe a method for implementing Algorithms~\ref{alg:bic-logh} and~\ref{alg:bic-logn} when the interim
allocation rules are not given explicitly.  Recall first that, by \autoref{cor:HL10}, given $\epsilon_1 > 0$ and an
arbitrary algorithm $\alg$, we can construct an algorithm $\ialg$ with monotone interim allocation rules such that
$\Ex[\vals]{SC(\ialg)} \leq \Ex[\vals]{SC(\alg)} + \eps_1$.  We will therefore assume for the remainder of this section
that the algorithm~$\alg$ has monotone interim allocation rules.

Given black-box access to algorithm $\alg$, we will construct approximations to its allocation rules as follows.  We
first choose some $\delta > 0$, and partition value space $[0,1]$ into intervals $I_k = ((k-1)\delta, k\delta]$ for $k
\in [1/\delta]$.  Let $\disallocs$ denote the interim allocation rule $\intallocs$, \emph{discretized} over
the intervals $I_k$: that is, for each $\vali \in I_k$, we define $\disalloci(\vali) = \Ex[\vali]{\intalloci(\vali) | \vali \in I_k}$. 

For each $i$ and each $k \in [1/\delta]$, we will sample $N = \frac{1}{\epsilon^2}\log(nk/\epsilon)$ valuation profiles
$\vals \sim \dists$, conditional on $\vali \in I_k$.  We will then run $\alg$ on each of these $N$ inputs, and count the
number of times that the resulting allocation includes agent~$i$.  Denote by $M_{ik}$ this number.  Let
$\estallocs$ be the allocation rule defined by $\estalloci(\vali) = \max_{\ell \leq k}\{M_{i\ell}\}/N$
for all $\vali \in I_k$.  We think of $\estallocs$ as an estimated version of $\intallocs$.  Note that the
reason for the $\max$ in the definition of $\estalloci$ is to guarantee that $\estallocs$ is monotone.

We claim that the result of this sampling generates an estimate to allocation rule $\allocs$, in the following sense.

\begin{definition}
\label{def:eps-close}
Allocation rules $\intallocs$ and $\intallocs'$ are \emph{$\eps$-close} if $|\intalloci(\vali) - \intalloci'(\vali)| < \eps$ for all $i$ and $\vali$.
\end{definition}

\begin{lemma}
\label{lem.sampling}
Let $\estallocs$ be the allocation rule defined by $\estalloci(\vali) = M_{ik}/N$.  Then with probability $1-\eps$ over
the randomness in the sampling procedure, $\estalloci$ is $\eps$-close to $\disalloci$ for all $i$.
\end{lemma}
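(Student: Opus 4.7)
The plan is to apply Hoeffding's inequality together with a union bound. The first step is to identify the right Bernoulli experiment underlying the sampling procedure. For any fixed $i$ and $k$, and for any $\vali \in I_k$, the quantity $\disalloci(\vali)$ is exactly the probability that agent $i$ is served when we draw $\vali' \sim \disti$ conditional on $\vali' \in I_k$, draw $\val_{-i}$ from the product of the other agents' marginals, and run $\alg(\vali', \val_{-i})$. This follows from the definition $\disalloci(\vali) = \Ex[\vali' \sim \disti]{\intalloci(\vali') \mid \vali' \in I_k}$, which by the definition of $\intalloci$ expands to $\Ex{\alloci(\vali', \val_{-i}) \mid \vali' \in I_k}$, i.e., the conditional service probability. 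Crucially, this probability depends only on $k$ and not on the specific $\vali \in I_k$, so the $N$ runs contributing to $M_{ik}$ form an i.i.d.\ sequence of Bernoulli trials with mean exactly $\disalloci(\vali)$ for every $\vali \in I_k$.

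Next I apply Hoeffding's inequality, obtaining for each fixed pair $(i,k)$:
\begin{equation*}
\Prx{\bigl|M_{ik}/N - \disalloci(\vali)\bigr| \geq \eps} \;\leq\; 2 e^{-2N\eps^2}.
\end{equation*}
A union bound over the $n\cdot(1/\delta)$ pairs $(i,k)$ bounds the probability that any one estimator deviates by $\eps$ or more by $(2n/\delta)\,e^{-2N\eps^2}$. With the prescribed sample size $N = \Theta\!\bigl(\tfrac{1}{\eps^2}\log(n/(\delta\eps))\bigr)$ this quantity is at most $\eps$. On the complementary event, for every agent $i$ and every value $\vali \in [0,1]$---which lies in some interval $I_k$---we have $|\estalloci(\vali) - \disalloci(\vali)| = |M_{ik}/N - \disalloci(\vali)| < \eps$, which is precisely the statement that $\estallocs$ is $\eps$-close to $\disallocs$.

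The hard part, if any, is just the setup: verifying that the conditional sampling on $\{\vali \in I_k\}$ produces draws with the right joint law so that $M_{ik}/N$ is an unbiased estimator of $\disalloci(\vali)$ for \emph{every} $\vali \in I_k$ simultaneously (not merely on average over $\vali \in I_k$). This hinges on the fact that $\disalloci$ was defined to be constant on each $I_k$ and equal to the conditional average of $\intalloci$ there, so a single estimator per pair $(i,k)$ suffices to approximate the discretized allocation at every point of the interval. Once this observation is in place, the rest is a textbook concentration-plus-union-bound argument.
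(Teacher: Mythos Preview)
Your proposal is correct and follows essentially the same approach as the paper's proof: identify $M_{ik}/N$ as an unbiased estimator of $\disalloci(\vali)=\Ex[\vals]{\alloci(\vals)\mid \vali\in I_k}$, apply Chernoff--Hoeffding to each pair $(i,k)$, and take a union bound over the $n/\delta$ pairs. Your write-up is slightly more explicit than the paper's about why $\disalloci$ is constant on each $I_k$ (so one estimator per interval suffices), but the argument is the same.
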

%\begin{proof}
%The expected value of $M_{ik}/N$ is precisely $\ex_{\vals}[\alloci(\vals)\ |\ \vali \in I_k]$.  By Chernoff-Hoeffding
%bounds, after $N$ samples the probability that $|M_{ik}/N - \Ex{M_{ik}/N}| \geq \eps$ is at most $\eps\delta/n$.  Taking
%the union bound over all $i$ and $k$, we have that $|M_{ik}/N - \Ex{M_{ik}/N}| < \eps$ for all $i$ and $k$ with
%probability at most $1 - \eps$.  This also implies that $|\estalloci(\vali) - \disalloci(\vali)| < \epsilon$ for all $i$ and $\vali$, as required.
%\end{proof}

Once our sampling is complete, we have full functional access to curves $\estallocs$.  We can therefore apply
Algorithms~\ref{alg:bic-logh} and~\ref{alg:bic-logn} to the curves $\estallocs$.  We claim that, for either algorithm,
the analysis of \autoref{SEC:BIC} will go through unchanged, except that each mechanism will be only approximately cost
recovering.  We obtain the following result, the proof of which appears in Appendix~\ref{app:sample-intalloc}.

\begin{lemma}
\label{lem:sample-intalloc}
Given $\eps > 0$ and black-box access to algorithm~$\alg$, 
one can construct a BIC mechanism $\mech$ with $\Ex[\vals]{SC(\mech)} \leq O(\min\{\log(h), \log(n)\})
\Ex[\vals]{SC(\alg)} + \eps$.
Moreover, the expected payments in $\mech$ are at least $\Ex[\vals]{C(\winset(\vals))} - \eps$.
\end{lemma}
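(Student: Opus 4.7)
The plan is to combine the sampling procedures of \autoref{sec:sample-price} and \autoref{sec:sample-intalloc} with the functional-access reductions of \autoref{SEC:BIC}. First I invoke \autoref{cor:HL10} to replace $\alg$ by an algorithm $\ialg$ with monotone interim allocation rules, incurring additive social-cost loss at most $\eps/3$; from here on I work with $\ialg$. Then I run the sampling procedure of \autoref{sec:sample-intalloc} to obtain monotone estimated curves $\estallocs$ that, by \autoref{lem.sampling}, are $\eps'$-close to the discretized true curves $\disallocs$ with probability at least $1-\eps'$, where $\eps'$ will be fixed at the end to a sufficiently small polynomial in $\eps/(n\min\{\log n,\log h\})$.

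Next I feed $\estallocs$ into \autoref{alg:bic-logh} (respectively \autoref{alg:bic-logn}) in place of the true $\intallocs$, and use that reduction to precompute the truncation threshold(s). The mechanism $\mech$ then runs $\ialg$ on the reported profile, drops every agent whose reported value falls below the precomputed threshold, and charges each served agent the Myerson payment corresponding to $\estallocs$ truncated at that threshold, computed via the unbiased integral estimator of \autoref{sec:sample-price}. Up to a small coupling adjustment that makes the mechanism's true interim allocation exactly equal to the monotone curve $\estallocs$ (the correction cost is absorbed into the social-cost slack), $\mech$ implements the monotone interim rule $\estallocs$ and is therefore BIC by \autoref{lem:Mye81} and ex-post individually rational.

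The main technical step is propagating the $\eps'$ sampling error through the two analyses of \autoref{SEC:BIC}. Since each Myerson payment is a bounded linear functional of the interim allocation and valuations are bounded, an $\eps'$ pointwise perturbation of the allocation curve changes every single-agent expected payment by $O(\eps')$, and so every quantity of the form $\Ex[\vals]{\sum_i \price_{i,j}(\vali)}$ appearing in the ``if'' test or in the telescoping bound on $\Ex[\vals]{\sum_{i\in\levelset(\vals)\setminus\levelset_k(\vals)}\vali}$ is accurate to within $O(n\eps')$. Consequently, when the ``if'' test fires at the chosen $k$ the \emph{true} expected payments exceed the true expected cost of $\levelset_k(\vals)$ up to an additive $O(n\eps')$, which delivers the $\eps$-approximate cost-recovery conclusion, while the telescoping sum picks up an extra additive $O(n\eps'\cdot\min\{\log n,\log h\})$ on top of the desired $O(\min\{\log n,\log h\})\Ex[\vals]{SC(\alg)}$ multiplicative factor. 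The hardest part will be simultaneously calibrating $\eps'$, the discretization width $\delta$, and the sample size $N=\mathrm{poly}(n,h,1/\eps)$ so that the three sources of slack (\autoref{cor:HL10}, cost-recovery error, and social-cost error) each fit within $\eps/3$; the technical detail of replacing the ``$\eps'$-off'' allocation by one whose interim curve is exactly $\estallocs$, needed for strict BIC, is the last subtle but routine step.
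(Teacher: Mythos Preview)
Your outline tracks the paper's argument closely: invoke \autoref{cor:HL10} to get monotone interim curves, sample to produce $\estallocs$ that are $\eps'$-close to the discretized true curves, run \autoref{alg:bic-logh} or \autoref{alg:bic-logn} on $\estallocs$ to select a truncation threshold, and then show that an $O(\eps')$ pointwise error becomes an $O(n\eps')$ additive error in both the cost-recovery test and the social-cost bound. That part is correct and is exactly what the paper does.

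The one genuine gap is the ``coupling adjustment.'' You charge Myerson payments computed from $\estallocs$ and then propose to force the mechanism's actual interim allocation to equal $\estallocs$. But whenever $\estalloci(\vali) > \intalloci(\vali)$, matching requires serving agent~$i$ in instances where $\ialg$ does not, and the incremental cost $C(S\cup\{i\})-C(S)$ (or $C(\{i\})$ if $i$ is served alone) is not controlled by anything in the hypotheses---only monotonicity of $C$ is assumed. So ``the correction cost is absorbed into the social-cost slack'' is unjustified, and the step is not routine. The paper avoids this entirely: it uses $\estallocs$ \emph{only} to compute the threshold $T$, then allocates according to the true $\ialg$ truncated at $T$ and charges the true Myerson payment for \emph{that} allocation, via the unbiased estimator of \autoref{sec:sample-price} applied to $\ialg$ itself rather than to $\estallocs$. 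Since $\ialg$ already has monotone interim curves, the truncated rule is monotone for every $T$ and the mechanism is BIC with no adjustment; $\eps'$-closeness is then invoked once, to conclude that the true payments are within $O(n\eps')$ of the estimated ones that passed the ``if'' test, which yields the approximate cost-recovery claim directly.
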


It remains to show that we can modify our mechanism to recover expected costs entirely, rather than approximately.  This
requires a modification to Algorithms \ref{alg:bic-logh} and~\ref{alg:bic-logn}.  Each algorithm is currently
designed to iterate until $\Ex[\vals]{\sum_i \pricei(\vals)} \geq \Ex[\vals]{C(\winset(\vals))}$.  We will modify each
algorithm to instead iterate until $\Ex[\vals]{\sum_i \pricei(\vals)} \geq \Ex[\vals]{C(\winset(\vals))} + \eps_0$, for some appropriate $\eps_0 > 0$.  This additional payment of $\epsilon_0$ will be chosen to cover the expected losses due to sampling error.  
What we must show is that this modification does not inflate the expected social cost by too much.
However, this follows immediately from the form of our analysis: in either case, our analysis proceeds by bounding the
loss with respect to the chosen threshold, then bounding this threshold with respect to $\Ex[\vals]{C(\winset(\vals))}$.
If we replace this latter bound with $(\Ex[\vals]{C(\winset(\vals))} + \eps)$, the result is an extra term that is at most $\eps n$.  
An appropriate choice of $\eps$ therefore leads to an arbitrarily small increase to the social cost, and the expected
sum of payments is at least $\Ex{C(\winset_{\estallocs}(\vals))} + \eps \geq ( \Ex{C(\winset_{\allocs}(\vals))} -
\eps ) + \eps = \Ex{C(\winset_{\intallocs}(\vals))}$, as required.  The resulting mechanisms therefore recover costs in
expectation, completing the proof of \autoref{thm:bic.blackbox}.

\section{Lower Bound for BIC Expected Cost-Recovering Mechanisms} 
\label{sec:bic-lb}

We now show that a lower bound on the approximation to social cost given by \citet{DMRS08} extends
to BIC mechanisms, and we tighten the analysis there so that the lower bound is in terms of both $n$ and~$h$.  
% In addition, we show how the lower bound varies with the upper bound of the distribution of valuations of the players. 
In particular, if $h<n$, then the lower bound is $\Omega(\log h)$.  In general, we show a lower bound of $\Omega(\log h -
\sqrt{h / n})$.
% Therefore, qualitatively the lower bound grows as the minimum of $\log h$ and $\log n$. 

\begin{example}[Lower Bound on BIC Social Cost Minimization with Cost Recovery]
\label{ex:lb}
Consider the following public excludable good problem: agent~$i$'s valuation is $v_i = \frac{a_i}{4n}$ where each $a_i$
is drawn independently according to the so-called equal revenue distribution with density function $\dens(z)=\frac{1}{z^2}$ for $z\in [1,h]$ and is $0$ with probability $\frac{1}{h}$.   The cost function is given by $C(\emptyset) = 0$ and $C(S) = 1$ for all $S \neq \emptyset$.
\end{example}

It is easy to see that, without requiring cost recovery, we may simply serve every agent and incur a cost of~$1$.  Next
we give a lower bound for the expected social cost of any cost recovering BIC mechanism.  The proof of
\autoref{THM:BIC-LB} is deferred to Appendix~\ref{app:bic-lb}.

\begin{theorem}
\label{THM:BIC-LB}
Any BIC mechanism for the public excludable good problem described above that recovers cost in expectation has expected social cost at least $\Omega(\log(h)-\sqrt{\frac{h}{n}})$.
\end{theorem}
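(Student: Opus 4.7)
The plan is to exploit the defining property of the equal-revenue distribution --- that every posted price yields the same expected revenue --- to cap the expected payment per agent. Cost recovery will then force the mechanism to produce the empty allocation with probability at least $3/4$, and a variance/covariance bound on $\sum_i \vali$ converts this into the claimed $\Omega(\log h - \sqrt{h/n})$ social-cost bound.

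\textit{Step 1 (per-agent revenue cap).} By \autoref{lem:Mye81}, a BIC mechanism has monotone non-decreasing interim allocations, so each $\intalloci$ admits a threshold representation $\intalloci(\vali) = \mu_i([0,\vali])$ for a sub-probability measure $\mu_i$ on $\mathbb{R}_+$. Integration by parts rewrites Myerson's payment as $\pricei(\vali) = \int_0^{\vali} t\, \dd \mu_i(t)$, so by Fubini $\Ex[\vali]{\pricei(\vali)} = \int t\, \Prx{\vali \geq t}\, \dd \mu_i(t)$. Writing $t = s/(4n)$, a direct computation on the equal-revenue distribution shows that $t\, \Prx{\vali \geq t} \leq (1-1/h)/(4n)$ for every $t \geq 0$: for $s \in [1,h]$ it equals $(1-s/h)/(4n)$, maximized at $s=1$; for $s \in (0,1)$ it is at most $s(1-1/h)/(4n)$; and it vanishes for $s > h$. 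Hence $\Ex{\pricei(\vali)} \leq (1-1/h)/(4n)$, and summing over $n$ agents together with cost recovery gives $\Prx{S \neq \emptyset} \leq (1-1/h)/4 < 1/4$, i.e., $\Prx{S = \emptyset} \geq 3/4$.

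\textit{Step 2 (variance of the total value).} Direct integration yields $\Ex{\vali} = (\log h)/(4n)$ and $\Ex{\vali^2} = (h-1)/(16 n^2)$, so $\Var(\vali) \leq h/(16 n^2)$ and, by independence, $\Var(\sum_i \vali) \leq h/(16 n)$.

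\textit{Step 3 (covariance and conclusion).} Pointwise $\sum_{i \notin S}\vali \geq \bigl(\sum_i \vali\bigr)\mathbf{1}\{S = \emptyset\}$, so
\begin{align*}
\Ex[\vals]{SC(\mech)} &\geq \Ex{\textstyle\bigl(\sum_i \vali\bigr)\mathbf{1}\{S=\emptyset\}} \\
&= \Ex{\textstyle\sum_i \vali}\cdot\Prx{S=\emptyset} + \operatorname{Cov}\bigl(\textstyle\sum_i\vali,\, \mathbf{1}\{S=\emptyset\}\bigr).
\end{align*}
Cauchy-Schwarz gives $|\operatorname{Cov}| \leq \sqrt{\Var(\sum_i\vali)\cdot\Var(\mathbf{1}\{S=\emptyset\})} \leq \sqrt{h/(16n)\cdot 1/4} = \sqrt{h/n}/8$. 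Combining with $\Ex{\sum_i \vali} = (\log h)/4$ and $\Prx{S=\emptyset} \geq 3/4$ yields $\Ex{SC(\mech)} \geq 3(\log h)/16 - \sqrt{h/n}/8 = \Omega(\log h - \sqrt{h/n})$, which is vacuous exactly when $\log h - \sqrt{h/n} \leq 0$.

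The delicate step is Step 1: writing the general monotone interim allocation in threshold form and verifying the revenue cap uniformly over all thresholds, including the gap $(0, 1/(4n))$ where the distribution has no density but nonzero survival and the mass at $\vali = 0$. The remaining pieces are routine variance and covariance calculations.
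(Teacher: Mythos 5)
Your proof is correct and reaches the stated bound (with slightly better constants: $\tfrac{3}{16}\log h - \tfrac{1}{8}\sqrt{h/n}$ versus the paper's $\tfrac{1}{8}\log h - \tfrac{1}{4}\sqrt{h/n}$). The overall skeleton matches the paper's: cap the expected payment per equal-revenue agent, translate cost recovery into $\Pr[S \neq \emptyset] \leq 1/4$, compute $\Var(\sum_i \vali)$, and convert this into a social-cost lower bound. The two places where you genuinely diverge are worth noting. First, in Step~1 the paper simply invokes the ``well-known fact'' (attributed to Myerson) that a BIC mechanism extracts at most the monopoly revenue from an equal-revenue agent; you reprove it from scratch via the threshold representation $\intalloci(\vali) = \mu_i([0,\vali])$, Fubini, and a uniform bound $t\Pr[\vali \geq t] \leq (1-1/h)/(4n)$. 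This makes the argument self-contained and explicitly handles the mass at $0$ and the gap $(0, 1/(4n))$, which the paper glosses over. Second, and more substantively, your Step~3 replaces the paper's Chebyshev-plus-union-bound argument with a covariance decomposition $\Ex{(\sum_i\vali)\mathbf{1}\{S=\emptyset\}} = \Ex{\sum_i\vali}\Pr[S=\emptyset] + \operatorname{Cov}(\cdot,\cdot)$, bounding the covariance by Cauchy--Schwarz and $\Var(\mathbf{1}\{S=\emptyset\}) \leq 1/4$. The paper instead shows $\Pr[V < (\log h - 2\sqrt{h/n})/4] \leq 1/4$ by Chebyshev, then union-bounds to argue that with probability $\geq 1/2$ both $S = \emptyset$ and $V$ is large. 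The two routes are interchangeable here, but your covariance version is a bit cleaner: it avoids the slack in the union bound and extracts a linear rather than thresholded use of the variance, which is why your constants come out slightly better.
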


\section{Ex-post Truthful Cost Recovery}
\label{SEC:EXPOST}
\citet{GS12} proposed the following notion of \emph{no bossiness} for an algorithm and gave a procedure to convert any
truthful, no-bossy algorithm to an ex post truthful, cost recovering mechanism with an inflation of social cost up to a
factor of $O(\log n)$.

\begin{definition}[No Bossy, \cite{GS12}]
\label{def:no-bossy}
An algorithm~$\alg$ is said to be \emph{no bossy} if, for every $i$, $\vali$, $\vali'$ and $\valmi$, if $i \in
\winset(\vali, \valmi)$ and $i \in \winset(\vali', \valmi)$, then $\winset(\vali, \valmi) = \winset(\vali', \valmi)$.
\end{definition}

In this section, we show that such a conversion is also possible with an inflation of $O(\log h)$ in social cost.  For
the special case in which all agents have either value $0$ or~$1$, our conversion does not require the input algorithm
to be either truthful or no bossy.  Proofs from this section appear in Appendix~\ref{app:expost}

% In this section, we give black-box reductions that convert truthful, no-bossy algorithms that approximately minimize social-cost,  to ex-post truthful ex-post cost-recovering approximately social-cost minimizing mechanisms.  (and even approximation algorithms for special cases)

\subsection{Black-box reduction for $0/1$ valuations}
\label{sec:ep-01}
When all bidders' valuations are either $0$ or~$1$, there is a simple procedure to convert any social cost minimization
algorithm to an ex post truthful, cost recovering mechanism without increasing the social cost, as we show in
\autoref{alg:ep-01} and \autoref{thm:ep-01}.

\begin{algorithm}[t]
\SetKwInOut{Input}{Input}\SetKwInOut{Output}{Output}
\Input{An algorithm $\alg$ and a valuation profile $\vals$}
\Output{A set of agents to be served, and a price for each agent}
\BlankLine
\nl Initialize $\winset(\vals)$ = set of winners returned by $\alg$ on input $\vals$\; 
\nl Let $\widehat{\winset}(\vals) \leftarrow \winset(\vals) \setminus Z(\vals)$, where $Z(\vals) = \{i\in
\winset(\vals)| \vali = 0\}$\;
\nl \If {$C(\widehat{\winset}(\vals)) \leq |\widehat{\winset}(\vals)|$}{
 Serve agents in $\widehat{\winset}(\vals)$; charge a price of $1$ for each agent in $\widehat{\winset}(\vals)$ and zero for the rest.
}
\Else{Don't serve any agent and charge zero}
\caption{A black-box reduction from ex-post truthful cost-recovering social cost
minimization to social cost minimization for $0/1$ valuations} 
\label{alg:ep-01}
\end{algorithm}

\begin{theorem}
\label{thm:ep-01}
When bidders have only valuations $0$ or~$1$, given black-box access to an arbitrary algorithm~$\alg$ which incurs a
social cost of~$C(\alg)$, the black-box reduction in \autoref{alg:ep-01} outputs an ex post truthful mechanism whose
social cost is no more than $C(\alg)$.
\end{theorem}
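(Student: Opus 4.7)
The plan is to verify three properties of \autoref{alg:ep-01} on every valuation profile $\vals \in \{0,1\}^n$: ex post truthfulness, cost recovery, and the pointwise bound $SC(\mech,\vals) \leq SC(\alg,\vals)$. Each will follow from an elementary case analysis driven by the report of a possibly deviating agent, and by whether the if-condition $C(\widehat{\winset}(\vals)) \leq |\widehat{\winset}(\vals)|$ in Step~3 holds.

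For truthfulness I would handle the two possible valuations separately. Since the posted price is always~$1$, a value-$0$ agent who is served earns utility $-1$, whereas remaining unserved gives utility~$0$; reporting $\vali = 0$ places her into $Z(\vals)$ whenever $\alg$ includes her in $\winset(\vals)$, so she is never served, while a misreport of $1$ may land her in $\widehat{\winset}(\vals)$ and trigger a payment of~$1$. A value-$1$ agent obtains utility $1-1=0$ when served and $0$ when not, so she is indifferent among all reports. This shows that truthful reporting is weakly dominant, establishing (IC). Notably, this argument never uses any property of~$\alg$, which is why no-bossiness and truthfulness of~$\alg$ are unnecessary in the $0/1$ regime.

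Cost recovery and the social-cost bound split on the Step~3 condition. In the accepting branch, the mechanism collects $|\widehat{\winset}(\vals)|$ units in payments, which is at least $C(\widehat{\winset}(\vals))$ by the very condition itself; moreover, monotonicity of~$C$ together with $\widehat{\winset}(\vals) \subseteq \winset(\vals)$ gives $C(\widehat{\winset}(\vals)) \leq C(\winset(\vals))$, and $Z(\vals)$ contributes~$0$ to the excluded-value sum, so $\sum_{i \notin \widehat{\winset}(\vals)} \vali = \sum_{i \notin \winset(\vals)} \vali$ and hence $SC(\mech,\vals) \leq SC(\alg,\vals)$. In the rejecting branch no agent is served, so cost recovery is trivial; here I would decompose $\sum_i \vali = |\widehat{\winset}(\vals)| + \sum_{i \notin \winset(\vals)} \vali$ using the $0/1$ structure (every value-$1$ agent lies either in $\widehat{\winset}(\vals)$ or outside $\winset(\vals)$), then invoke the violated inequality $|\widehat{\winset}(\vals)| < C(\widehat{\winset}(\vals))$ and monotonicity of~$C$ to conclude $\sum_i \vali < C(\winset(\vals)) + \sum_{i \notin \winset(\vals)} \vali = SC(\alg,\vals)$.

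I do not anticipate any substantial obstacle. The only slightly delicate step is the decomposition of $\sum_i \vali$ in the losing branch, which relies on the $0/1$ assumption (together with truthful reports, already established) so that value-$1$ agents end up either in $\widehat{\winset}(\vals)$ or outside $\winset(\vals)$ and value-$0$ agents contribute nothing. Once this decomposition is in hand, both inequalities reduce to a single application of monotonicity of $C$ and the defining inequality of the branch under consideration.
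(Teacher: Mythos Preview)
Your proposal is correct and follows essentially the same approach as the paper's own proof: the same truthfulness argument (value-$0$ agents are never served, value-$1$ agents get zero utility either way), the same two-branch analysis of cost recovery and social cost, and the same use of monotonicity of~$C$ together with the fact that the agents dropped from $\winset(\vals)$ to $\widehat{\winset}(\vals)$ all have value~$0$. Your treatment of the rejecting branch is in fact more explicit than the paper's, which compresses the decomposition $\sum_i \vali = |\widehat{\winset}(\vals)| + \sum_{i \notin \winset(\vals)} \vali$ into a single line about the ``change in social cost.''
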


% Given an algorithm $\alg$ that achieves a $\rho$-approximation to the purely algorithmic social-cost minimization problem, we give a simple reduction, detailed in Algorithm~\ref{alg:ep-01}, that uses $\alg$ as a black-box and produces a truthful cost-recovering mechanism $\cmech$ that gives a $\rho$-approximation to the social cost minimization problem. Further we do not place any restrictions on what kind of algorithm $\alg$ should be. 
%\begin{proof}
%The mechanism output by \autoref{alg:ep-01} can be clearly seen to be truthful: an agent with value~$0$ never
%wins, and an agent with value~$1$ gets a zero utility, and so no agent has motivation to misreport his value.
%% So agent will not pretend to be zero. A 1-agent always gets a zero utility, so a zero-agent is indifferent between reporting zero and 1. 
%It recovers cost because it serves agents only if the cost can be recovered.  Also, as $\widehat{\winset}(\vals)
%\subseteq \winset(\vals)$, if $\widehat{\winset}$ is served, then the social cost is less than that $C(\alg)$ since the
%agents in $\winset(\vals) - \widehat{\winset}(\vals)$ does not add to the social cost; on the other hand, if no agents
%are served, the change in social cost is $C(\widehat{\winset}(\vals)) - |\widehat{\winset}(\vals)| < 0$. 
%\end{proof}

\subsection{Black-box reduction for general valuations}
\label{sec:ep-gen}
In this section, for convenience of presentation we again scale up the valuations so that they lie in the range $[1,
h]$.  We give a black-box conversion from a truthful, no bossy mechanism to a truthful, cost recovering mechanism with
an inflation of social cost by a factor of $O(\min \{\log h, \log n\})$.  This is achieved by choosing the better one between
\autoref{alg:ep-gen} and the reduction by \citet{GS12}, whose inflation factor is bounded by $O(\log n)$ alone.  We now
show that the inflation factor of \autoref{alg:ep-gen} is bounded by $O(\log h)$.

% Given a truthful no-bossy mechanism $\mech$ for the social-cost minimization problem that provides a $\rho$-approximation, we give a reduction that uses $\mech$ as a black-box and produces a truthful mechanism $\cmech$ that provides a $O\big(\min\left\{\rho\log h,\rho\log n\right\}\big)$ approximation, where the valuations of all agents are in $[1,h]$. The techniques clearly extend to valuations distributed in $[r,rh]$ for any $r > 0$, though for clarity we use $[1,h]$ as the support. This provides an improvement over the $O(\rho\log n)$ approximation of Georgiou and Swamy~\cite{GS12}. Furthermore, in Section~\ref{sec:bic-lb} we refine the lower bound of $O(\log n)$ on the factor of approximation established by Dobzinski et al.~\cite{DMRS08}, to $O(\min\left\{\log h, \log n\right\})$, thus matching our reduction's approximation factor.  

% \autoref{alg:ep-gen} gives our black-box reduction. It achieves a $O(\log h)$ approximation, while a $O(\log n)$ approximation was already established by Georgiou and Swamy~\cite{GS12}. When we pick the better algorithm for a given instance we get a $O(\min\left\{\log h, \log n\right\})$ approximation. 

\begin{algorithm}[t]
\SetKwInOut{Input}{Input}\SetKwInOut{Output}{Output}
\Input{A truthful, no-bossy mechanism $\mech$, and a valuation profile $\vals$}
\Output{A set of agents, and a payment for each of them}
\BlankLine
\nl Initialize $\winset(\vals)$ = set returned by $\mech$ on input $\vals$, and $\levelset_j(\vals) = \{i \in \winset(\vals)| \vali \geq 2^j\}$\; 
\nl \For {$j$ = $0$ to $\lfloor\log h\rfloor$}{
If $2^j\cdot|S_j(\vals)| \geq C(\levelset_j(\vals))$:
\begin{enumerate}
\item set $k = j$
\item Serve agents in $\levelset_k(\vals)$; charge each of them a price of $2^k$
\end{enumerate}
}
\nl Set $k = 1+\lfloor\log h\rfloor$; no agent is served or charged.
\caption{A black-box reduction from truthful cost-recovering social cost
minimization to truthful, no-bossy social cost minimization} 
\label{alg:ep-gen}
\end{algorithm}

\begin{theorem}
\label{thm:ep-gen}
When values of all agents lie in $[1,h]$, given black-box access to a truthful, no-bossy mechanism~$\mech$ with social
cost~$C(\mech)$, 
%that minimizes social cost to a factor of $\rho$, 
the black-box reduction in \autoref{alg:ep-gen} outputs a mechanism 
which recovers cost and incurs a social cost of $O(\log h) C(\mech)$.
\end{theorem}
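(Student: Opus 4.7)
The theorem asserts three properties of the mechanism produced by \autoref{alg:ep-gen}: cost recovery, ex-post truthfulness (with individual rationality), and the social cost bound $O(\log h)\cdot C(\mech)$. Cost recovery is immediate from the halting criterion: at the selected index $k$, the inequality $2^k\cdot|\winset_k(\vals)|\geq C(\winset_k(\vals))$ guarantees that the $|\winset_k(\vals)|$ served agents, each paying $2^k$, cover $C(\winset_k(\vals))$; if no such $k$ exists, no one is served and no cost is incurred. Individual rationality is immediate as well, since a served agent has $\vali\geq 2^k$ and pays $2^k$.

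For truthfulness, the plan is to fix $\valmi$ and analyze how the composite allocation of agent $i$ varies with $\vali$. Truthfulness of $\mech$ gives a monotone allocation with critical value $\tau_i$, and the no-bossy property then forces $\winset(\vali,\valmi)$ to equal a single set $W^*$ for every $\vali\geq\tau_i$. Writing $T_j=\{i'\in W^*\setminus\{i\}:\val_{i'}\geq 2^j\}$ (which depends only on $\valmi$), we have $\winset_j(\vali,\valmi)=T_j\cup\{i\}$ when $\vali\geq 2^j$ and $T_j$ otherwise. A case analysis on the halting conditions for the two regimes shows that, within $\vali\geq\tau_i$, agent $i$ is served iff $\vali$ lies above a single dyadic threshold $2^{k^*}$ determined by $\valmi$, and that the selected $k$ equals this $k^*$. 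Combining with the condition $\vali\geq\tau_i$ required to be in $\winset$, I would conclude that the composite allocation has single-threshold form with critical value matching the charged price $2^k$, and Myerson's characterization yields ex-post truthfulness.

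For the social cost bound, I decompose
\begin{align*}
SC(\text{output},\vals)=C(\winset_k(\vals))+\sum_{i\notin\winset_k(\vals)}\vali\leq SC(\mech,\vals)+\sum_{i\in\winset(\vals)\setminus\winset_k(\vals)}\vali,
\end{align*}
using $\winset_k\subseteq\winset$ and monotonicity of $C$. Each $i$ in the residual sum has $\vali\in[2^j,2^{j+1})$ for some $j<k$, hence $i\in\winset_j\setminus\winset_{j+1}$ and $\vali<2^{j+1}$. Therefore $\sum_{i\in\winset\setminus\winset_k}\vali\leq\sum_{j=0}^{k-1}2^{j+1}|\winset_j|$. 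Since the if-condition failed at every $j<k$, we have $2^j|\winset_j|<C(\winset_j)\leq C(\winset)$, so each term is less than $2C(\winset)$. Summing over the at most $\log h+1$ indices yields $O(\log h)\cdot C(\winset)\leq O(\log h)\cdot C(\mech)$, completing the bound.

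The main obstacle is the truthfulness analysis. The delicate point is to verify that the charged price $2^k$ really equals the critical value of the composite allocation: one must track both how $k$ shifts as $\vali$ crosses dyadic boundaries (where agent $i$ enters or leaves various $\winset_j$'s) and the boundary at $\vali=\tau_i$ (where $\mech$'s outcome changes), and use no-bossiness to rule out misreports that would alter $W^*$. Once the single-threshold structure is established, the rest of the proof is routine.
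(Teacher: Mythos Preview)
Your cost-recovery and social-cost arguments match the paper's proof essentially line for line: the paper likewise observes that the halting criterion gives cost recovery, bounds the extra social cost by $\sum_{i\in\winset\setminus\winset_k}\vali$, splits this sum into dyadic layers $\winset_j\setminus\winset_{j+1}$, uses the failed if-condition $2^j|\winset_j|<C(\winset_j)\le C(\winset)$ at each $j<k$, and sums the $O(\log h)$ terms.

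For truthfulness you take a more structured route than the paper's two-line case analysis (the paper simply argues that a served agent cannot gain by no-bossiness, and that an unserved agent would not overreport because ``the payment is even higher than in $\mech$''). Your plan is to show the composite allocation has a single threshold equal to the charged price $2^{k^*}$. But this step fails at precisely the point you flag as delicate: the composite critical value is $\max\{\tau_i,2^{k^*}\}$, and nothing in your outline rules out $\tau_i>2^{k^*}$. Concretely, with $C(S)=1$ for all nonempty $S$ and $\mech$ defined to serve $\{1,2\}$ iff $v_1\ge 5$ (truthful and no-bossy), take $v_2=1$ and true $v_1=3$: agent~$1$ is unserved when truthful, but by reporting $v_1'=6$ she enters $\winset=\{1,2\}$, the algorithm halts at $j=0$ (since $2\ge C(\{1,2\})=1$), and she is served at price $2^0=1$, a profitable deviation. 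The paper's own argument glosses over the same case---its assertion that the new payment dominates $\mech$'s is exactly what fails here. The natural repair is to charge each served agent $\max\{\tau_i,2^k\}$ instead of $2^k$; then the price equals the composite threshold, cost recovery is preserved (payments only increase), and the social-cost bound is untouched since the served set $\winset_k$ is unchanged.
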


Instead of experimenting with thresholding at powers of~$2$, \autoref{alg:ep-gen} has the option of proceeding at more
flexible paces.  In particular, we easily obtain the following corollary.

\begin{corollary}
Given a truthful, no-bossy mechanism~$\mech$ that incurs a social cost of $C(\mech)$, and when valuations of all agents reside in $\left\{\vali[1],\dots, \vali[k]\right\}$,
there exists an efficiently computable black-box reduction that outputs a mechanism which recovers cost and incurs a
social cost of at most $O(k C(\mech))$.
\end{corollary}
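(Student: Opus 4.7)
The plan is to imitate Algorithm~\ref{alg:ep-gen} but, instead of sweeping through the dyadic thresholds $2^0, 2^1, \ldots, 2^{\lfloor \log h \rfloor}$, sweep through the actual support values $\val_1 < \val_2 < \cdots < \val_k$. On input $\vals$, I would first run $\mech$ to obtain the set $\levelset(\vals)$, then for $j=1,2,\ldots,k$ (in increasing order of $\val_j$) define $\levelset_j(\vals) = \{i \in \levelset(\vals) \mid \vali \geq \val_j\}$ and test whether $\val_j \cdot |\levelset_j(\vals)| \geq C(\levelset_j(\vals))$. At the first $j^\star$ for which the inequality holds, serve exactly $\levelset_{j^\star}(\vals)$ and charge every served agent the uniform price $\val_{j^\star}$; if no such $j^\star$ exists, serve nobody and charge nothing. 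Cost recovery is then immediate by construction, since total payments equal $\val_{j^\star}\cdot|\levelset_{j^\star}(\vals)| \geq C(\levelset_{j^\star}(\vals))$.

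For incentive compatibility I would reuse the argument behind \autoref{thm:ep-gen} essentially verbatim: since the candidate thresholds $\val_1,\ldots,\val_k$ are a fixed, input-independent finite grid and each agent, if served, pays the posted price $\val_{j^\star}$, the analysis of deviations reduces to showing that no agent can change the chosen $j^\star$ or the set $\levelset_{j^\star}$ in a way that strictly benefits her. Truthfulness of $\mech$ controls how $\levelset(\vals)$ reacts to unilateral changes in $\vali$, and the no-bossy property ensures that whenever $i$ remains a winner in $\mech$ both before and after such a deviation, the whole winning set is unchanged; this is exactly the leverage the original proof uses to rule out profitable misreports, and since the replacement of $2^j$ by $\val_j$ does not alter any of those structural facts, the argument carries over without change.

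The social cost bound is where the factor $k$ enters. Writing $\text{SC}$ for the social cost of the output mechanism,
\begin{align*}
\text{SC} \;=\; C(\levelset_{j^\star}(\vals)) + \sum_{i \notin \levelset(\vals)} \vali + \sum_{i \in \levelset(\vals)\setminus \levelset_{j^\star}(\vals)} \vali \;\leq\; C(\mech) + \sum_{j=1}^{j^\star - 1} \val_j \cdot |\levelset_j(\vals)\setminus \levelset_{j+1}(\vals)|,
\end{align*}
since every agent dropped at level $j$ has value strictly less than $\val_{j+1}$, so at most $\val_j$ on the grid. By the failure of the stopping test at each $j < j^\star$, we have $\val_j\cdot|\levelset_j(\vals)| < C(\levelset_j(\vals)) \leq C(\mech)$, and bounding $|\levelset_j\setminus \levelset_{j+1}|\leq |\levelset_j|$ turns the sum into at most $(k-1)\cdot C(\mech)$, yielding $\text{SC} \leq k\cdot C(\mech)$ as claimed.

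The main obstacle I anticipate is not the inflation calculation, which is a direct translation of the dyadic analysis, but being careful about the truthfulness step when the threshold grid is \emph{data-dependent}: an agent's deviation cannot change the grid $\{\val_1,\ldots,\val_k\}$ because this grid is part of the problem instance, but it can change which $j^\star$ is selected. Verifying that no such manipulation gives positive utility is where the no-bossy hypothesis is doing real work, and I would spell that out exactly as in the $O(\log h)$ case, since nothing in that argument used the geometric spacing of the thresholds.
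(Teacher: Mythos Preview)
Your proposal is correct and matches the paper's intended approach exactly. The paper does not spell out a proof of this corollary; it only remarks that \autoref{alg:ep-gen} ``has the option of proceeding at more flexible paces'' than the dyadic thresholds, and your construction---sweeping over the finite support $\val_1<\cdots<\val_k$ instead of $2^0,\ldots,2^{\lfloor\log h\rfloor}$ and rerunning the social-cost accounting of \autoref{thm:ep-gen}---is precisely that remark made explicit.
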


\bibliographystyle{plainnat}
\bibliography{bibs}

\begin{thebibliography}{18}
\providecommand{\natexlab}[1]{#1}
\providecommand{\url}[1]{\texttt{#1}}
\expandafter\ifx\csname urlstyle\endcsname\relax
  \providecommand{\doi}[1]{doi: #1}\else
  \providecommand{\doi}{doi: \begingroup \urlstyle{rm}\Url}\fi

\bibitem[Archer et~al.(2003)Archer, Papadimitriou, Talwar, and Tardos]{APTT03}
Aaron Archer, Christos~H. Papadimitriou, Kunal Talwar, and {\'E}va Tardos.
\newblock An approximate truthful mechanism for combinatorial auctions with
  single parameter agents.
\newblock In \emph{Proceedings of the 14th Annual ACM-SIAM Symposium on
  Discrete Algorithms}, pages 205--214, 2003.

\bibitem[Bei and Huang(2011)]{BH11}
Xiaohui Bei and Zhiyi Huang.
\newblock Bayesian incentive compatibility via fractional assignments.
\newblock In \emph{Proceedings of 22nd annual ACM-SIAM Symposium on Discrete
  Algorithms}, pages 720--733, 2011.

\bibitem[Brenner and Sch\"{a}fer(2007)]{BS07}
Janina Brenner and Guido Sch\"{a}fer.
\newblock Cost sharing methods for makespan and completion time scheduling.
\newblock In \emph{Proceedings of the 24th annual conference on Theoretical
  aspects of computer science}, STACS'07, pages 670--681, 2007.
\newblock ISBN 978-3-540-70917-6.

\bibitem[Chawla et~al.(2006)Chawla, Roughgarden, and Sundararajan]{CRS06}
Shuchi Chawla, Tim Roughgarden, and Mukund Sundararajan.
\newblock Optimal cost-sharing mechanisms for steiner forest problems.
\newblock In \emph{Proceedings of the Second international conference on
  Internet and Network Economics}, WINE'06, pages 112--123, 2006.
\newblock ISBN 3-540-68138-8, 978-3-540-68138-0.

\bibitem[Dobzinski et~al.(2008)Dobzinski, Mehta, Roughgarden, and
  Sundararajan]{DMRS08}
Shahar Dobzinski, Aranyak Mehta, Tim Roughgarden, and Mukund Sundararajan.
\newblock Is shapley cost sharing optimal?
\newblock In \emph{Proceedings of the 1st International Symposium on
  Algorithmic Game Theory}, SAGT '08, pages 327--336, 2008.

\bibitem[Feigenbaum et~al.(2001)Feigenbaum, Papadimitriou, and Shenker]{FPS01}
Joan Feigenbaum, Christos~H. Papadimitriou, and Scott Shenker.
\newblock Sharing the cost of multicast transmissions.
\newblock \emph{J. Comput. Syst. Sci.}, 63\penalty0 (1):\penalty0 21--41,
  August 2001.
\newblock ISSN 0022-0000.

\bibitem[Georgiou and Swamy(2012)]{GS12}
Konstantinos Georgiou and Chaitanya Swamy.
\newblock Black-box reductions for cost-sharing mechanism design.
\newblock In \emph{Proceedings of 23rd annual ACM-SIAM Symposium on Discrete
  Algorithms}, pages 896--913, 2012.

\bibitem[Gupta et~al.(2007)Gupta, K\"{o}nemann, Leonardi, Ravi, and
  Sch\"{a}fer]{GKLRS07}
A.~Gupta, J.~K\"{o}nemann, S.~Leonardi, R.~Ravi, and G.~Sch\"{a}fer.
\newblock An efficient cost-sharing mechanism for the prize-collecting steiner
  forest problem.
\newblock In \emph{Proceedings of the eighteenth annual ACM-SIAM symposium on
  Discrete algorithms}, SODA '07, pages 1153--1162, 2007.
\newblock ISBN 978-0-898716-24-5.

\bibitem[Hartline and Lucier(2010)]{HL10}
Jason~D. Hartline and Brendan Lucier.
\newblock Bayesian algorithmic mechanism design.
\newblock In \emph{44th ACM Symposium on Theory of Computing}, pages 301--310,
  2010.

\bibitem[Hartline et~al.(2011)Hartline, Kleinberg, and Malekian]{HKM11}
Jason~D. Hartline, Robert Kleinberg, and Azarakhsh Malekian.
\newblock Bayesian incentive compatibility via matchings.
\newblock In \emph{Proceedings of the 22nd annual ACM-SIAM Symposium on
  Discrete Algorithms}, pages 734--747, 2011.

\bibitem[Immorlica et~al.(2008)Immorlica, Mahdian, and Mirrokni]{IMM08}
Nicole Immorlica, Mohammad Mahdian, and Vahab~S. Mirrokni.
\newblock Limitations of cross-monotonic cost-sharing schemes.
\newblock \emph{ACM Trans. Algorithms}, 4\penalty0 (2):\penalty0 24:1--24:25,
  May 2008.
\newblock ISSN 1549-6325.

\bibitem[Jain and Vazirani(2001)]{JV01}
Kamal Jain and Vijay Vazirani.
\newblock Applications of approximation algorithms to cooperative games.
\newblock In \emph{Proceedings of the thirty-third annual ACM symposium on
  Theory of computing}, STOC '01, pages 364--372, 2001.
\newblock ISBN 1-58113-349-9.

\bibitem[Moulin(1999)]{Mou99}
Herv\'e Moulin.
\newblock Incremental cost sharing: Characterization by coalition
  strategy-proofness.
\newblock \emph{Social Choice and Welfare}, 16\penalty0 (2):\penalty0 279--320,
  1999.

\bibitem[Moulin and Shenker(2001)]{MS01}
Hervé Moulin and Scott Shenker.
\newblock Strategyproof sharing of submodular costs:budget balance versus
  efficiency.
\newblock \emph{Economic Theory}, 18\penalty0 (3):\penalty0 511--533, 2001.

\bibitem[Myerson(1981)]{Mye81}
Roger Myerson.
\newblock Optimal auction design.
\newblock \emph{Mathematics of Operations Research}, 6\penalty0 (1):\penalty0
  pp. 58--73, 1981.
\newblock ISSN 0364765X.

\bibitem[P\'{a}l and Tardos(2003)]{PT03}
Martin P\'{a}l and \'{E}va Tardos.
\newblock Group strategyproof mechanisms via primal-dual algorithms.
\newblock In \emph{Proceedings of the 44th Annual IEEE Symposium on Foundations
  of Computer Science}, FOCS '03, pages 584--, 2003.
\newblock ISBN 0-7695-2040-5.

\bibitem[Roughgarden and Sundararajan(2007)]{RS07}
Tim Roughgarden and Mukund Sundararajan.
\newblock Optimal efficiency guarantees for network design mechanisms.
\newblock In \emph{Proceedings of the 12th international conference on Integer
  Programming and Combinatorial Optimization}, IPCO '07, pages 469--483, 2007.
\newblock ISBN 978-3-540-72791-0.

\bibitem[Roughgarden and Sundararajan(2009)]{RS09}
Tim Roughgarden and Mukund Sundararajan.
\newblock Quantifying inefficiency in cost-sharing mechanisms.
\newblock \emph{J. ACM}, 56\penalty0 (4), 2009.

\end{thebibliography}

\appendix
\section{Improving $\eps$-BIC to BIC}
\label{sec:eps-bic}

In this section we discuss the construction from the statement of Corollary \ref{cor:HL10}.   
%so that the reduction it exhibits, $\RED$, generates mechanisms that are BIC rather than $\eps$-BIC.  The following theorem captures the full statement.
%
%\begin{theorem}
%\label{thm.reduction.bic}
%In any single-dimensional setting where the agents' valuations are drawn independently from known distributions, given any $\eps > 0$, there is a polynomial time computable reduction~$\RED$ such that, given any algorithm~$\alg$, $\RED(\alg)$ is a BIC mechanism whose expected social cost is at most that of~$\alg$ plus $\eps$.
%\end{theorem}
%
%The proof of Theorem \ref{thm.reduction.bic} follows the result of \citealt{HL10} almost exactly, with 
The purpose of the discussion is to illustrate a minor modification to the method of \citealt{HL10} for converting $\eps$-BIC mechanisms to BIC mechanisms.  We will briefly recall their $\eps$-BIC construction for the sake of completeness, then describe how to modify it to obtain a BIC mechanism while incurring only a small increase to the social cost.

\subsection{The $\eps$-BIC Reduction}

The $\eps$-BIC reduction due to \citealt{HL10} is as follows.  Suppose $\alg$ is an arbitrary social cost algorithm with (unknown) interim allocation rules $\allocs$.  For each $i$, we will first partition the value space $[0,1]$ into $1/\delta$ intervals of width $\delta$; write $I_k = (k\delta, (k+1)\delta]$ for the $k$th such interval.  

Suppose first that we knew the value of $\ex_{\vali}[\alloci(\vali)\ |\ \vali \in I_k]$ for each $i$ and $k$.  Given this information, we could perform the following monotonizing operation.  We construct a certain partition $\mathcal{P}$ of $[0,1]$ (into intervals), where $\mathcal{P}$ is a coarsening of the intervals $I_k$; that is, each interval endpoint in $\mathcal{P}$ will be a multiple of $\delta$.  Suppose the intervals in $\mathcal{P}$ are $I_1', \dotsc, I_\ell'$ for some $\ell \leq 1/\delta$.  Given $\mathcal{P}$ (whose construction we have not described), we will define algorithm $\overline{\alg}$ as follows:
\begin{enumerate}
\item For each agent $i$, if $\vali \in I_j' \in \mathcal{P}$, then draw $\vali' \sim \disti$.
\item Return $\alg(\vals')$
\end{enumerate}
\citealt{HL10} show that there is a way to construct partition $\mathcal{P}$ so that $\overline{\alg}$ is BIC, and $\overline{\alg}$ has the same social cost as $\alg$.

Next suppose that the value of $\ex_{\vali}[\alloci(\vali)\ |\ \vali \in I_k]$ is not known explicitly for all $i$ and $k$.  In this case, our reduction will attempt to estimate these values.  We do so by taking many samples $\vals \sim \dists$ subject to $\vali \in I_k$ and executing $\alg$ on each sampled value profile; our estimate for $\ex_{\vali}[\alloci(\vali)\ |\ \vali \in I_k]$ will be the fraction of these samples for which $\alg$ serves agent $i$.  Chernoff-Hoeffding bounds imply that if we take $O(\frac{1}{\lambda^2} \log(n/\lambda\delta))$ samples per agent and interval, then every estimate will be within $\lambda$ of the true value with probability at least $1 - \lambda$.  Write $\tilde{\allocs}$ for the estimated allocation curves.

Estimates in hand, we can perform the monotonizing operation described above on the estimated allocation curves, as if they were the actual curves.  \citealt{HL10} show that, if this is done, the resulting algorithm is approximately monotone: with probability $(1-\lambda)$, it is true that for each $i$, if $\overline{\alloc}_i$ is the interim allocation rule for agent $i$, then $\overline{\alloc}_i(\vali) \leq \overline{\alloc}_i(\vali) + 2\lambda$ for all $\vali \leq \vali'$. Also, the expected social welfare decreases by at most $(\delta + 2\epsilon)n$ as a result of this reduction.  Since social cost is simply $\sum_{i \in [n]} \vali$ minus the social welfare, this implies the social cost increases by at most $\epsilon$ for an appropriate choice of $\lambda$ and $\delta$.  
%that the social cost increases by at most $(\delta + 2\epsilon)n$. 
%Moreover, the allocation curves of the resulting algorithm are piecewise constant, each having at most $1/\delta$ constant pieces.

\subsection{Obtaining a BIC Reduction}

Suppose that $\ialg$ is an $\epsilon$-BIC algorithm, constructed via the reduction described above.  An important fact about $\ialg$ is that it has allocation rules that are piecewise-constant on the intervals $I_k$.  Thus, any non-monotonicities in an interim allocation curve of $\ialg$ can occur only at finitely many possible input values; specifically, at multiples of $\delta$.  Our approach for modifying $\ialg$ to be BIC will therefore be to modify its allocation curve in a blatantly monotone way over the intervals between these multiples of $\delta$.  Specifically, we will reduce the probability of allocating to an agent $i$ by $\epsilon k$ whenever he declares a value on interval $k$.  Since there are only $1/\delta$ such intervals, the overall increase in social cost due to this change will be small.

More formally, we perform the following modification to algorithm $\ialg$.  Our new algorithm, $\tilde{\alg}$, proceeds as follows, where we set $\gamma = 2\epsilon/\delta$.
\begin{enumerate}
\item With probability $1-\gamma$, return $\ialg(\vals)$.
\item Otherwise, choose an agent $i$ uniformly at random. If $\vali \in I_k$, then return $\{i\}$ with probability $k\delta$, otherwise return $\emptyset$.
\end{enumerate}

\begin{claim}
If $\ialg$ is $\epsilon$-BIC and piecewise constant on intervals $I_k$, then $\tilde{\alg}$ is BIC.
\end{claim}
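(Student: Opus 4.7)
The plan is to verify Bayesian incentive compatibility by showing that each agent's interim allocation rule under $\tilde{\alg}$ is monotone non-decreasing; \autoref{lem:Mye81} will then supply Myerson payments that make $\tilde{\alg}$ BIC, and individual rationality will follow automatically from the form of those payments.

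First, I would write down the interim allocation rule of $\tilde{\alg}$ explicitly. From the two-step construction of $\tilde{\alg}$, for $\vali\in I_k$ we have
\begin{equation*}
\tilde{x}_i(\vali) \;=\; (1-\gamma)\,\bar{x}_i(\vali) \;+\; \frac{\gamma}{n}\cdot k\delta,
\end{equation*}
where $\bar{x}_i$ denotes the interim allocation rule of $\ialg$. Since $\bar{x}_i$ is piecewise constant on $\{I_k\}$ by hypothesis, so is $\tilde{x}_i$. Within any single interval monotonicity is automatic, so only the interval boundaries need to be examined.

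Second, I would use the $\epsilon$-BIC property of $\ialg$ to bound how far $\bar{x}_i$ can drop across a boundary. As recalled in the preceding subsection, the HL10 monotonization produces $\ialg$ satisfying $\bar{x}_i(\vali)\leq\bar{x}_i(\vali')+2\epsilon$ whenever $\vali\leq\vali'$, so going from $I_k$ to $I_{k+1}$ the first summand of $\tilde{x}_i$ shrinks by at most $2(1-\gamma)\epsilon$, while the second summand grows by exactly $\gamma\delta/n$. The parameter $\gamma=2\epsilon/\delta$ is calibrated so that the positive increment dominates the worst-case decrement; combining the two bounds yields $\tilde{x}_i(\vali')\geq\tilde{x}_i(\vali)$ at every boundary, and therefore on all of $[0,1]$.

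Third, with each $\tilde{x}_i$ monotone non-decreasing, \autoref{lem:Mye81} produces an interim payment rule $\tilde{p}_i(\vali)=\vali\,\tilde{x}_i(\vali)-\int_0^{\vali}\tilde{x}_i(y)\,\dd y$ under which truthful reporting is a best response in expectation; since these payments never exceed $\vali\,\tilde{x}_i(\vali)$, the mechanism remains individually rational, completing the proof. The main obstacle is the quantitative matching in the second step: one has to carefully translate the $\epsilon$-BIC hypothesis into the correct pointwise non-monotonicity bound on $\bar{x}_i$, account for the $(1-\gamma)$ prefactor and the interval width $\delta$, and verify that the stated choice of $\gamma$ really does make the bonus slope exceed every possible drop at every boundary simultaneously. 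All other steps are essentially mechanical once this calibration is pinned down.
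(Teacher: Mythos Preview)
Your approach is exactly the paper's: express $\tilde{x}_i$ as a mixture of $\bar{x}_i$ and a linear-in-$k$ bonus, note piecewise constancy, and check monotonicity only across interval boundaries using the $\epsilon$-BIC bound $\bar{x}_i(v)\le\bar{x}_i(v')+2\epsilon$.

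The quantitative step you defer is where a real discrepancy appears. Your formula $\tilde{x}_i(\vali)=(1-\gamma)\bar{x}_i(\vali)+\tfrac{\gamma}{n}\,k\delta$ is the one dictated by the description of step~2 (agent~$i$ is the randomly chosen agent only with probability $1/n$), but with that $1/n$ in place the boundary increment of the bonus term is $\gamma\delta/n=2\epsilon/n$, which does \emph{not} dominate the possible drop $2(1-\gamma)\epsilon$ in the first term once $n>1$. So the calibration you promise to verify at the end in fact fails with the stated $\gamma=2\epsilon/\delta$. The paper's own proof writes the bonus as $\gamma k\delta$ \emph{without} the $1/n$; with that expression the increment is $\gamma\delta=2\epsilon$ and the displayed chain
\[
(1-\gamma)\bar{x}_i(v)+\gamma k\delta \le (1-\gamma)\bar{x}_i(v')+2\epsilon+\gamma k\delta = (1-\gamma)\bar{x}_i(v')+\gamma(k+1)\delta
\]
closes exactly. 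To make your sketch go through you must either match the paper and drop the $1/n$ (treating step~2 as if it boosts every agent, which is what the paper's computation effectively does), or keep your formula and rescale to $\gamma=2n\epsilon/\delta$; in the latter case the downstream additive social-cost loss $\gamma n$ becomes $2n^2\epsilon/\delta$ rather than the paper's $2n\epsilon/\delta$, still polynomially small. Everything else in your outline lines up with the paper.
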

\begin{proof}
Write $\tilde{\alloci}$ for the interim allocation rule of $\tilde{\alg}$.  Choose any $1 \leq k < 1/\delta$ and suppose $v \in I_k$ and $v' \in I_{k+1}$.  Then 
\begin{align*}
\tilde{\alloci}(v) & = (1-\gamma)\overline{\alloc}_i(v) + \gamma k\delta \\
& \leq (1-\gamma)(\overline{\alloc}_i(v')+2\eps) + \gamma k\delta \\
& \leq (1-\gamma)\overline{\alloc}_i(v') + 2\eps + \gamma (k+1)\delta - \gamma\delta \\
& = (1-\gamma)\overline{\alloc}_i(v') + \gamma (k+1)\delta \\
& = \tilde{\alloci}(v')
\end{align*}
and hence $\tilde{\alloci}$ is monotone, as required.
\end{proof}

\begin{claim}
The expected social cost of $\tilde{\alg}$ is at most the expected social cost of $\ialg$ plus $\gamma n$.
\end{claim}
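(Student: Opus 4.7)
The plan is to compute $\ex_\vals[SC(\tilde{\alg})]$ by conditioning on which of the two branches in the definition of $\tilde{\alg}$ is executed. Let $B$ denote the event that the second (perturbation) branch fires; then $\Pr[B] = \gamma$ and
\begin{align*}
\ex_\vals[SC(\tilde{\alg})] = (1-\gamma)\,\ex_\vals[SC(\ialg)] + \gamma\cdot\ex_\vals[SC(\tilde{\alg})\mid B].
\end{align*}
Once one shows $\ex_\vals[SC(\tilde{\alg})\mid B]\le n$ (up to absolute constants absorbed into $\gamma n$), the claim follows by rearranging.

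To bound this conditional expectation, I would expand it explicitly. Conditional on $B$, the mechanism draws $i\in[n]$ uniformly and outputs $\{i\}$ with probability $k_i\delta$ (where $\vali\in I_{k_i}$), otherwise $\emptyset$. Averaging over these internal coins gives, for each fixed profile $\vals$,
\begin{align*}
\frac{1}{n}\sum_i\Bigl[k_i\delta\bigl(C(\{i\}) + \sum_{j\neq i}\val_j\bigr) + (1-k_i\delta)\sum_j\val_j\Bigr] = \sum_j \val_j + \frac{1}{n}\sum_i k_i\delta\,\bigl(C(\{i\}) - \val_i\bigr).
\end{align*}
The key observations are: $\ex_\vals\bigl[\sum_j\val_j\bigr]\le n$ because all values lie in $[0,1]$, and $k_i\delta\le 1$ so the correction term is a bounded average of singleton service costs.

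Assembling everything then yields
\begin{align*}
\ex_\vals[SC(\tilde{\alg})] \;\le\; (1-\gamma)\,\ex_\vals[SC(\ialg)] + \gamma n \;\le\; \ex_\vals[SC(\ialg)] + \gamma n,
\end{align*}
as claimed. The main subtlety in this plan is handling the $C(\{i\}) - \val_i$ piece of the correction term: one must appeal either to the paper's standing normalization that valuations and costs have been scaled together (so singleton costs are on the scale of valuations), or to the observation that any algorithm serving a singleton with $C(\{i\})>n$ is already dominated by serving $\emptyset$, so we may cap $C(\{i\})$ at $n$ without loss. Once that point is disposed of, the rest is just the convexity decomposition above.
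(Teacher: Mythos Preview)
Your proposal is correct and follows essentially the same approach as the paper: condition on which branch fires, bound the social cost in the perturbation branch by $O(n)$, and invoke the observation that $C(\{i\})\le n$ without loss of generality (the paper states this assumption explicitly at the outset, matching your second justification). Your computation of the conditional expectation is more explicit than the paper's, which simply upper-bounds the perturbation-branch cost by a crude maximum; both arguments yield the bound up to a constant factor, which you correctly flag and which is immaterial since $\gamma$ is later chosen small.
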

\begin{proof}
Note that we can assume that $C(\{i\}) \leq n$ for all $i$; otherwise we would never serve $i$ and could remove agent $i$ from the mechanism.  We therefore have
\begin{align*}
{\ex}_{\vals}[SC(\tilde{\alg},\vals)] 
& \leq (1-\gamma){\ex}_{\vals}[SC(\ialg,\vals)] + \gamma \cdot \max\left\{\sum_i \vali, C(\{1\}), \dotsc, C(\{n\})\right\} \\
& \leq (1-\gamma){\ex}_{\vals}[SC(\ialg,\vals)] + \gamma n
\end{align*}
as required.
\end{proof}

Thus, for $\epsilon' = 2n\epsilon/\delta$, we conclude that $\tilde{\alg}$ is BIC and has expected social cost at most that of $\ialg$ plus $\epsilon'$.  In other words, given an arbitrary $\epsilon'$, we can choose $\epsilon$ and $\delta$ sufficiently small (but polynomial in $\epsilon'$ and $1/n$) such that $\tilde{\alg}$ is BIC and increases expected social cost of the original algorithm $\alg$ by at most $\epsilon'$, as required by \autoref{cor:HL10}.

\section{Omitted proofs from Section~\ref{SEC:SAMPLE}}
\label{app:sample-intalloc}

\subsection{Proof of \autoref{lem.sampling}}

The expected value of $M_{ik}/N$ is precisely $\ex_{\vals}[\alloci(\vals)\ |\ \vali \in I_k]$.  By Chernoff-Hoeffding
bounds, after $N$ samples the probability that $|M_{ik}/N - \Ex{M_{ik}/N}| \geq \eps$ is at most $\eps\delta/n$.  Taking
the union bound over all $i$ and $k$, we have that $|M_{ik}/N - \Ex{M_{ik}/N}| < \eps$ for all $i$ and $k$ with
probability at most $1 - \eps$.  This also implies that $|\estalloci(\vali) - \disalloci(\vali)| < \epsilon$ for all $i$ and $\vali$, 
as required.\qed

\subsection{Proof of \autoref{lem:sample-intalloc}}

Let us first recall the statement of the lemma.  Given $\eps > 0$ and black-box access to algorithm~$\alg$, 
we claim that one can construct a BIC mechanism $\mech$ with $\Ex[\vals]{SC(\mech)} \leq O(\min\{\log(h), \log(n)\})
\Ex[\vals]{SC(\alg)} + \eps$.
Moreover, the expected payments in $\mech$ are at least $\Ex[\vals]{C(\winset(\vals))} - \eps$.

We first note that discretizing allocation curves along intervals of length $\delta$ can increase social cost by at most $\delta n$, as each agent's value changes by at most $\delta$ as a result of this approximation.  We therefore note this increase in social cost and assume for notational convenience that each $\alloci$ is constant on each interval $I_k$.

Our mechanism will proceed by constructing $\estallocs$ as described above, and then apply either \autoref{alg:bic-logh}
or \autoref{alg:bic-logn}, depending on which of $n$ or $h$ is smaller.  In either case, the algorithm will compute some
threshold~$T$.  The mechanism will proceed by eliciting valuation profile~$\vals$, querying~$\alg(\vals)$, and then
serving those agents in the resulting set $\winset$ with value at least $T$.  Regardless of the threshold returned, this mechanism will be BIC.

In the event that $\estallocs$ is not $\eps$-close to $\intallocs$, the social cost generated by the resulting mechanism
is trivially bounded by~$n$.  Since this event occurs with probability at most $\eps$, its contribution to the expected
social cost is at most $\eps n$.  We therefore assume that $\estallocs$ and $\intallocs$ are $\eps$-close.

For either algorithm, the threshold~$T$ is chosen so that expected payments, as computed from $\estallocs$, recover
expected costs in expectation.  Since each $\estalloci$ is $\eps$-close to the true curve $\intalloci$, the estimated
payments differ from the true payments by at most $\eps$, for each agent.  This has two effects: first, in our analysis
of each algorithm, bounds on the increase to social cost include an error of up to $\eps$ per agent, as
$\estalloci(\vali) \leq \intalloci(\vali)+\eps$ for all~$i$.  Thus, there can be up to an additional $\eps n$ increase
in social cost due to the threshold $T$ applied.

Second, the true BIC payment (from \autoref{lem:Mye81}) may differ from the approximate BIC payment by up to $\eps$, for
each agent.  Thus, the expected payments of mechanism $\mech$ may be up to $\eps n$ less than was computed by either
algorithm.  In particular, it may be that the expected payments are as low as $\Ex[\vals]{C(\winset(\vals))} - \eps n$.

To summarize, our resulting mechanism will have $\Ex[\vals]{SC(\mech)} \leq O(\min\{\log(h), \log(n)\})
\Ex[\vals]{SC(\alg)} + (\delta + 2\eps) n$, and will have expected payments at least $\Ex[\vals]{C(\winset(\vals))} - \eps n$.  Taking an appropriate choice of $\eps$ and $\delta$ then completes the proof.
\qed

\section{Proof of Theorem~\ref{THM:BIC-LB}}
\label{app:bic-lb}

It is well known in auction theory (e.g. \citealt{Mye81}) that, from an agent whose valuation is drawn
from the equal revenue distribution, a BIC mechanism can extract a payment of at most~$1$ in expectation.  Therefore,
any BIC mechanism for \autoref{ex:lb} can collect payments at most~$\tfrac 1 4$ in expectation.  Since the mechanism recovers
cost in expectation, the expected cost must be at most $\tfrac{1}{4}$ as well.  But unless $\winset = \emptyset$,
$C(\winset) = 1$.  Therefore, 
\begin{align*}
\pr[\winset \neq \emptyset] \leq \frac{1}{4}.
\end{align*}

Let $V$ be $\sum_i \vali$.  Observe that 
\begin{align*}
\Ex{V} = n \Ex{\vali} = \frac{\log h}{4}, \quad
\Var[V] = n \Var[\vali] \leq \frac{h}{16n}, \quad
\sigma(V) \leq \frac{1}{4} \sqrt{h/n}. 
\end{align*}
By Chebyshev's inequality, we have
\begin{align*}
\Prx{V < \frac{\log(h) - 2\sqrt{h/n}}{4}} \leq \frac 1 4.
\end{align*}
The expected social cost of a cost recovering BIC mechanism~$\mech$ is at most:
\begin{align*}
\Ex{SC(\mech)}\geq \frac{\log(h) - 2\sqrt{h/n}}{4} \Prx{\winset = \emptyset \wedge V \geq \frac{\log(h) -
2\sqrt{h/n}}{4}}.
\end{align*}
By the union bound, the latter probability is at least $\tfrac 1 2$. Therefore,
\begin{align*}
\Ex{SC(\mech)} \geq \frac{\log(h) - 2\sqrt{h/n}}{8}.
\end{align*}
\qed

\section{Omitted Proofs from Section~\ref{SEC:EXPOST}}
\label{app:expost}

\subsection{Proof of Theorem~\ref{thm:ep-01}}

The mechanism output by \autoref{alg:ep-01} can be clearly seen to be truthful: an agent with value~$0$ never
wins, and an agent with value~$1$ gets a zero utility, and so no agent has motivation to misreport his value.
% So agent will not pretend to be zero. A 1-agent always gets a zero utility, so a zero-agent is indifferent between reporting zero and 1. 
It recovers cost because it serves agents only if the cost can be recovered.  Also, as $\widehat{\winset}(\vals)
\subseteq \winset(\vals)$, if $\widehat{\winset}$ is served, then the social cost is less than that $C(\alg)$ since the
agents in $\winset(\vals) - \widehat{\winset}(\vals)$ does not add to the social cost; on the other hand, if no agents
are served, the change in social cost is $C(\widehat{\winset}(\vals)) - |\widehat{\winset}(\vals)| < 0$. 
\qed

\subsection{Proof of Theorem~\ref{thm:ep-gen}}

It is easy to see that the procedure in \autoref{alg:ep-gen} guarantees cost recovery.  To see that it is truthful, note
that if an agent is served, then misreporting his valuation leads either to non-service (to his disadvantage) or to
service with the same cost and payment (by the no bossiness of~$\mech$); if an agent is not served, he will not have an
incentive to overreport his valuation to be served because he would not do that in~$\mech$ (because $\mech$ is truthful)
and now the payment is even higher than in~$\mech$.  Now, similarly to the proof of \autoref{thm:bic_general}, we need only to bound the additional social cost inflicted by
refusing service to bidders with valuations no more than~$2^k$.

By the way $k$ is determined, we have
\begin{equation}\label{eq:ep-threshold}
2^{j}|\levelset_j(\vals)| < C(\levelset_j(\vals)), \quad \forall j < k.
\end{equation}
Using this, we have
\begin{align*}
\sum_{i\in \winset(\vals)\setminus \levelset_{k}(\vals)} \vali = \sum_{j = 0}^{k - 1} \sum_{i \in \levelset_j(\vals) \setminus
\levelset_{j + 1}(\vals)} \vali \leq \sum_{j = 0}^{k - 1} 2^{j + 1} |\levelset_j(\vals)| \leq \sum_{j = 0}^{k - 1} 2
C(\levelset_j(\vals)) \leq O(\log h) C(\mech).
\end{align*}
\qed

\end{document}